\newcolumntype{P}[1]{>{\centering\arraybackslash}p{#1}}
\newcolumntype{M}[1]{>{\centering\arraybackslash}m{#1}}
\def\rl{\rangle \langle}
\def\openone{\leavevmode\hbox{\small1\kern-3.8pt\normalsize1}}
\def\ch{{\cal H}}
\def\cm{{\cal M}}
\def\cn{{\cal N}}
\def\11{\mathbb{I}}
\def\sln{\succeq_{\operatorname{l.n.}}}
\newcommand{\esln}[1]{\succeq_{\operatorname{l.n.}}^{{#1}}}
\def\rel{\succeq_{\operatorname{rel}}}
\newcommand{\erel}[1]{\succeq_{\operatorname{rel}}^{{#1}}}
\newtheorem{definition}{Definition}[section]
\newtheorem{lemma}[definition]{Lemma}
\newtheorem{theorem}[definition]{Theorem}
\newtheorem{corollary}[definition]{Corollary}
\newtheorem{conjecture}[definition]{Conjecture}
\newcommand{\supp}{\mathop{\rm supp}\nolimits}
\newcommand{\tr}{\mathop{\rm Tr}\nolimits}
\newcommand{\im}{\mathop{\rm Im}\nolimits}
\newcommand{\cA}{{\cal A}}
\newcommand{\cB}{{\cal B}}
\newcommand{\cD}{{\cal D}}
\newcommand{\cE}{{\cal E}}
\newcommand{\cN}{{\cal N}}
\newcommand{\cT}{{\cal T}}
\newcommand{\cH}{{\cal H}}
\newcommand{\cR}{{\cal R}}
\newcommand{\cM}{{\mathcal{M}}}
\newcommand{\coh}[1]{Q^{(1)}\!\left(#1\right)}
\newcommand{\priv}[1]{P^{(1)}\!\left(#1\right)}
\newcommand{\Qss}[1]{Q_{ss}\!\left(#1\right)}
\numberwithin{equation}{section}
\DeclareRobustCommand\openone{\leavevmode\hbox{\small1\normalsize\kern-.33em1}}
\newcommand{\id}{{\rm{id}}}
\newcommand{\be}{\begin{equation}}
	\newcommand{\ee}{\end{equation}}
\newcommand{\bea}{\begin{eqnarray}}
	\newcommand{\eea}{\end{eqnarray}}
\newcommand{\beas}{\begin{eqnarray*}}
	\newcommand{\eeas}{\end{eqnarray*}}
\title{Bounding quantum capacities via partial orders and complementarity}
\begin{document}

\author[1,2]{Christoph Hirche\thanks{\texttt{christoph.hirche@gmail.com}}}
\affil[1]{Zentrum Mathematik, Technical University of Munich, 85748 Garching, Germany}
\affil[2]{Centre for Quantum Technologies, National University of Singapore, Singapore}
\author[3]{Felix Leditzky\thanks{\texttt{leditzky@illinois.edu}}}
\affil[3]{Department of Mathematics \& IQUIST, University of Illinois at Urbana-Champaign, Urbana, IL 61801, USA}

\date{}

\maketitle

\begin{abstract}
Quantum capacities are fundamental quantities that are notoriously hard to compute and can exhibit surprising properties such as superadditivity. Thus, a vast amount of literature is devoted to finding tight and computable bounds on these capacities. We add a new viewpoint by giving operationally motivated bounds on several capacities, including the quantum capacity and private capacity of a quantum channel and the one-way distillable entanglement and private key of a quantum state. These bounds are generally phrased in terms of capacity quantities involving the complementary channel or state. As a tool to obtain these bounds, we discuss partial orders on quantum channels and states, such as the less noisy and the more capable order. Our bounds help to further understand the interplay between different capacities, as they give operational limitations on superadditivity and the difference between capacities in terms of the information-theoretic properties of the complementary channel or state. They can also be used as a new approach towards numerically bounding capacities, as discussed with some examples. 
\end{abstract}

\section{Overview and main results} 

Capacities give the optimal rate at which a certain information theoretic task can be achieved. As such, they play a fundamental role in understanding the capabilities afforded by a specific resource such as a quantum channel or a quantum state. Specific tasks of interest include for example public or private information transmission over a channel, and the distillation of maximally entangled or private states. 
In many cases we even know of mathematical formulas that exactly determine these capacities.
Those could already be the end of our journey; however, to really understand or even numerically evaluate these quantities still remains an extremely challenging task. Two typical questions are as follows. First, we know from operational arguments that the rate at which we can transmit private classical information over a quantum channel is never smaller than the rate at which we can send quantum information over the same channel. But it is often unknown how much more exactly of the former can be sent. Second, in both of these examples the capacity is given by a regularized formula, meaning it has to be evaluated on $n$ copies of the channel in the limit of $n$ going to infinity. This makes numerical evaluation generally intractable. It is again easy to see that the regularized quantity can never be smaller than the single-copy version it is based on, but it is a priori unclear how much bigger the regularized quantity can become. 

Due to these challenges, a significant part of the quantum information literature strives to find better bounds on quantum channel capacities that help us to narrow down their numerical value, and hence give a better understanding of their information-theoretic capabilities. A small collection of recent results on upper bounds on capacities includes for example~\cite{sutter2017approximate,WXD16,leditzky2017useful, leditzky2018approaches, leditzky2018quantum, wang2018semidefinite, fanizza2020quantum, fawzi2021hierarchy,wang2019optimizing, fang2021geometric}.
Naturally, a main focus in this area has been to find approximations of capacities in terms of upper bounds that can be easily evaluated numerically. However, it can often be difficult to assign any operational understanding to these bounds. In this work we address the latter point by finding bounds on capacities that have an operational interpretation themselves, ideally phrased in terms of capacities.
These bounds may shed further light on the information-theoretic structures that allow for phenomena such as superadditivity. 

An important concept in this work will be that of complementarity. It is well known that one can think of any quantum channel $\cN$ as an isometric embedding into a larger (tensor product) space, followed by discarding the auxiliary system which is usually referred to as the environment. The complement of that channel, denoted $\cN^c$, is obtained by keeping the environment while discarding the original output system. Information-theoretically, the complementary channel models the leakage of information to the environment. Note that while the complement of a given channel is not unique, all choices are information-theoretically equivalent. The concept of complementarity can also be applied to mixed bipartite states shared between two parties, say, Alice and Bob. Purifying such a shared state and discarding Bob's system results in a complementary state quantifying the correlations between Alice and the environment.

As a starting point for our discussion, consider the class of degradable channels \cite{devetak2005capacity}. Those are channels for which the receiver can apply another channel $\cD$ to simulate the complementary channel, i.e., $\cN^c=\cD\circ\cN$. Intuitively, this implies that the channel $\cN$ should never be worse at transmitting information than $\cN^c$. As a consequence of degradability, the quantum capacity $Q(\cN)$ and private capacity $P(\cN)$ of a degradable channel $\cN$ simplify~\cite{devetak2005capacity,smith2008private} (see Sec.~\ref{sec:definitions} for a more detailed discussion of these capacities):  %
\begin{align}
P(\cN) = Q(\cN) = Q^{(1)}(\cN) = P^{(1)}(\cN), \label{Eq:degraded}
\end{align}
where the channel's coherent information $Q^{(1)}(\cN)$ and private information $P^{(1)}(\cN)$ are the corresponding non-regularized, single-copy quantities, defined in \eqref{eq:coherent-information} and \eqref{eq:private-information} below, respectively.

Equation~\eqref{Eq:degraded} for degradable channels hints at the fact that the relationship between a channel and its complement determines properties of their capacities. 
Later, Watanabe~\cite{watanabe2012private} made this idea more precise by translating the classical concept of less noisy and more capable channels~\cite{shannon1958note} to the quantum setting. Both of these classes had previously proven useful in classical information theory, but Watanabe realized that they gain new meaning when applied to a quantum channel and its complement. Namely, we call a channel regularized less noisy when the private capacity of its complement is zero, $P(\cN^c)=0$, and regularized more capable when its complement's quantum capacity is zero, $Q(\cN^c)=0$. Note that regularized less noisy implies regularized more capable by the well-known capacity inequalities $0\leq Q(\cN)\leq P(\cN)$ valid for any quantum channel $\cN$. Moreover, a degradable channel $\cN$ satisfies $P(\cN^c)=0$, since the existence of the degrading map makes it impossible for the sender to transmit private information to the environment. Hence, degradability implies both regularized less noisy and more capable. Watanabe \cite{watanabe2012private} showed that (a) relaxing degradability to regularized less noisy is still sufficient for \eqref{Eq:degraded} to hold; (b) regularized more capable still implies  $P(\cN) = Q(\cN).$

Naturally, it is desirable to see what we can learn from these results for general channels. To this end, Sutter et al.~introduced the concept of approximately degradable channels~\cite{sutter2017approximate}, showing that the relations in Equation~\eqref{Eq:degraded} still hold approximately when a channel is close to being degradable in a suitable sense. This idea led to some of the best capacity bounds available which are even efficiently computable as the optimal approximation constant is given by a convex optimization problem. The recent work~\cite{HRS20} introduced approximately less noisy and more capable classes, leading to potentially tighter bounds, however at the cost of generally losing the efficient computability. Here, we remedy this disadvantage by showing that the approach can be used to give bounds with operational meaning that extend on the previously achieved results. 

We will now discuss the main results of this work, while referring to the later sections for technical definitions, statements and proofs. In particular, the technical sections include new results on connections between classes of channels and partial orders that might be of independent interest beyond the capacity bounds presented here. 

Our main results regarding quantum channels and their capacities are discussed in Sec.~\ref{sec:channels}.
As a warm-up to the structure of our results, we give bounds on the classical capacity $C(\cN)$ and the entanglement-assisted classical capacity $C_E(\cN)$ in Theorem~\ref{Thm:ClassicalBounds},  
\begin{alignat}{2}
Q(\cN) &\leq C(\cN)  &&\leq Q(\cN) + C(\cN^c) \label{eq:classical-bound}\\
2Q^{(1)}(\cN) &\leq C_E(\cN) &&\leq  2Q^{(1)}(\cN) + C_E(\cN^c).
\end{alignat}
Note that in contrast to the other capacity formulas discussed here, $C_E(\cN)$ does not require regularization \cite{bennett1999entanglement} and can be efficiently computed~\cite{fawzi2018efficient} (see Sec.~\ref{sec:definitions} for a more detailed discussion). 

Next, we focus on the private and quantum capacity. In Corollary~\ref{Cor:CapBounds} we extend the results in~\cite{HRS20}, showing that the quantum capacity of the complementary channel limits how different the private and quantum capacity of the channel can be:
\begin{alignat}{3}
Q^{(1)}(\cn) &\leq P^{(1)}(\cn) &&\leq Q^{(1)}(\cn) &&+ Q^{(1)}(\cn^c), \label{eq:QP-inf-bounds}\\
Q(\cn) &\leq P(\cn) &&\leq Q(\cn) &&+ Q(\cn^c).\label{eq:QP-cap-bounds}
\end{alignat}
Similarly, the entanglement-assisted private information $P_E(\cN^c)$ (defined in \eqref{eq:EA-priv} below) limits the increase due to regularization,
\begin{align}
	Q^{(1)}(\cn) &\leq Q(\cn) \leq Q^{(1)}(\cn) + P_E(\cN^c), \label{eq:Q-reg-bound}\\
	P^{(1)}(\cn) &\leq P(\cn) \leq P^{(1)}(\cn) + Q(\cN^c) + P_E(\cN^c).\label{eq:P-reg-bound}
\end{align}
The entanglement-assisted private information $P_E(\cN)$ was proven in~\cite{qi2018entanglement} to equal the entanglement-assisted private capacity of degradable channels. This extends a result in~\cite{cross2017uniform} which, translated to our notation, states that $Q^{(1)}(\cn) = Q(\cn)$ if $P_E(\cN^c)=0$. 
While the condition $P_E(\cN^c)=0$ is referred to as `informationally degradable' in \cite{cross2017uniform}, we refer to this property as `fully quantum less noisy' in this work. 

The above bounds give an operationally meaningful, quantitative version of the results by Watanabe \cite{watanabe2012private}. Furthermore, they make the intuition precise that the properties of the complementary channel of a general channel limit the possibility of having superadditivity or a higher private capacity than quantum capacity in a fundamental way. 
In Sec.~\ref{sec:examples} we discuss how our bounds can be used to obtain numerical bounds on the private capacity. 
We also identify a potentially new class of zero-private-capacity channels that we call `bi-PPT' channels, consisting of quantum channels $\cN$ such that both $\cN$ and $\cN^c$ are PPT \cite{HorodeckiHorodeckiEA00}; such channels have vanishing private capacity by our bound \eqref{eq:QP-cap-bounds}, and may lead to an observation of superactivation of private capacity.
In numerical studies we found approximate examples of such bi-PPT channels with small (but provably positive) private capacity.

Section~\ref{Sec:States} then slightly changes focus from investigating channels to discussing quantum states. Approximate degradable quantum states were defined in~\cite{leditzky2017useful} and used therein to give bounds on the one-way distillable entanglement $D_{\rightarrow}(\rho_{AB})$. Additionally, we consider here the one-way distillable private key $K_{\rightarrow}(\rho_{AB})$. We define new partial orders based on these two quantities, which lead us to results similar to the channel setting. First, we define the complementary state $\rho_{AB}^c$ of a state $\rho_{AB}$ as $\rho_{AB}^c := \rho_{AE} =\tr_B\Psi_{ABE}$ where $\Psi_{ABE}$ is a purification of $\rho_{AB}$. We then show in Theorem~\ref{Thm:distEntKey} that the one-way distillable entanglement of the complementary state limits the difference between distillable key and entanglement, 
\begin{alignat}{2}
D^{(1)}_{\rightarrow}(\rho_{AB}) &\leq K^{(1)}_{\rightarrow}(\rho_{AB}) &&\leq D^{(1)}_{\rightarrow}(\rho_{AB}) + D^{(1)}_{\rightarrow}(\rho_{AB}^c) \label{eq:DK-inf-bounds}\\
D_{\rightarrow}(\rho_{AB}) &\leq K_{\rightarrow}(\rho_{AB}) &&\leq D_{\rightarrow}(\rho_{AB}) + D_{\rightarrow}(\rho_{AB}^c).\label{eq:DK-cap-bounds}
\end{alignat}
Similarly, the complement state's one-way distillable key limits the increase due to regularization, see Theorem~\ref{Thm:distEntReg} and Corollary~\ref{Cor:DistKeyReg}, 
\begin{alignat}{2}
D^{(1)}_{\rightarrow}(\rho_{AB}) &\leq D_{\rightarrow}(\rho_{AB}) &&\leq D^{(1)}_{\rightarrow}(\rho_{AB}) + K_{\rightarrow}(\rho_{AB}^c) \label{eq:D-reg-bound}\\
K^{(1)}_{\rightarrow}(\rho_{AB}) &\leq K_{\rightarrow}(\rho_{AB}) &&\leq K^{(1)}_{\rightarrow}(\rho_{AB}) + K_{\rightarrow}(\rho_{AB}^c) + D_{\rightarrow}(\rho_{AB}^c). \label{eq:K-reg-bound}
\end{alignat}
Note the formal equivalence between \cref{eq:QP-inf-bounds,eq:QP-cap-bounds,eq:Q-reg-bound,eq:P-reg-bound} and \cref{eq:DK-inf-bounds,eq:DK-cap-bounds,eq:D-reg-bound,eq:K-reg-bound}.
Together, these results show that a similar intuition as for channels also holds for quantum states: the possibility of extracting certain resources from the complementary state determines properties of the capacities of the state itself. 

Finally, in Section~\ref{Sec:sidechannel} we discuss symmetric side-channel assisted capacities and how superactivation is directly related to the question whether the sets of degradable and regularized less noisy channels are actually different.
That is, we show the implication
\begin{align}
P(\cdot) \text{ can be superactivated} \quad\Rightarrow\quad \mathrm{DEG} \subsetneq \mathrm{LN}_\infty,
\end{align} 
where $\mathrm{DEG}$ and $\mathrm{LN}_\infty$ denote the classes of degradable and regularized less noisy channels, respectively.

We end by discussing some open problems in Section~\ref{Sec:Outlook}, intended to inspire further research in this direction.  
The appendices contain some additional proofs and observations supporting the main text.

\section{Partial orders on channels and their implications} \label{sec:channels}

\subsection{Definitions and notation}

In this paper classical and quantum systems are denoted by capital letters. Generally, $A$, $B$, $E$ denote quantum systems and $U$, $T$, $X$ denote classical systems.
A (classical or quantum) system $R$ is associated with a finite-dimensional Hilbert space $\cH_R$.
A quantum state $\rho_R$ on $R$ is a positive semidefinite linear operator with unit trace acting on $\cH_R$.
A state $\rho_R$ of rank $1$ is called pure, and we may choose a normalized vector $|\psi\rangle_R\in\cH_R$ satisfying $\rho_R=|\psi\rangle\langle\psi|_R$.
Otherwise, $\rho_R$ is called a mixed state. 
By the spectral theorem, every mixed state can be written as a convex combination of pure states.
For a pure state $|\phi\rangle$ we often use the shorthand $\phi \equiv |\phi\rangle\langle\phi|$.
For a classical system $X$ there is a distinguished orthonormal basis $\lbrace |x\rangle\rbrace_{x=1}^{\dim \cH_X}$ of $\cH_X$ diagonalizing every state on $X$.
For a quantum state $\rho_A$ we denote by $H(A)_\rho = -\tr\rho_A\log\rho_A$ the von Neumann entropy.
For a bipartite state $\rho_{AB}$ acting on the tensor product space $\cH_A\otimes \cH_B$, we denote by $I(A:B)_\rho = H(A)+H(B)-H(AB)$ the mutual information.
For a tripartite state $\rho_{ABC}$ acting on the tensor product space $\cH_A\otimes \cH_B\otimes \cH_C$, we denote by $I(A:B|C)_\rho = H(AC)+H(BC)-H(ABC)-H(C)$ the conditional mutual information.

A quantum channel $\cN\colon A\to B$ is a linear completely positive and trace-preserving map from the space of linear operators on $\cH_A$ to those on $\cH_B$.
For every quantum channel $\cN\colon A \to B$ we can choose an auxiliary space $\cH_E$, usually called the environment, and an isometry $V\colon \cH_A\to \cH_B\otimes \cH_E$, usually called a Stinespring isometry, such that $\cN(X_A) = \tr_E (V X_A V^\dagger)$.
A channel isometry gives rise to the so-called complementary channel $\cN^c\colon A\to E$ modeling the loss of information to the environment, defined as $\cN^c(X_A) = \tr_B (V X_A V^\dagger)$.
Letting $|\gamma\rangle_{AA'} = \sum_{i=1}^{\dim\cH_A} |i\rangle_A\otimes |i\rangle_{A'}$ be an unnormalized maximally entangled state defined with respect to an orthonormal basis $\lbrace |i\rangle_A\rbrace_{i}$ of $\cH_A$, the Choi operator of $\cN$ is defined as $\tau_{AB} = (\id_A\otimes \cN)(\gamma_{AA'})$.
A quantum channel $\cN\colon A\to B$ with complementary channel $\cN^c\colon A\to E$ is called degradable if there exists another channel $\cD\colon B\to E$ satisfying $\cN^c = \cD\circ \cN$ \cite{devetak2005capacity}.
A quantum channel is called anti-degradable if its complementary channel is degradable.
In analogy to channels and their complementary channels, we can also define the related concept of a complementary state of a bipartite state $\rho_{AB}$: considering a purification $|\psi\rangle_{ABE}$ satisfying $\rho_{AB} = \tr_E \psi_{ABE}$, the complementary state is defined as $\rho_{AB}^c \coloneqq \rho_{AE} = \tr_B \psi_{ABE}$ \cite{leditzky2017useful}.
Degradability and antidegradability of states are defined similarly as for channels.

\subsection{Partial orders and channel capacities}\label{sec:definitions}

In classical information theory, the more capable and less noisy orders play an important role~\cite{shannon1958note}. These are generally defined based on an entropic condition on the output states of a channel required to hold for a specified set of inputs. 
There are different ways to translate these classical concepts to the quantum setting; here we focus on the regularized more capable and less noisy orders introduced by Watanabe~\cite{watanabe2012private}.
In these orders, the second channel is fixed to be the complementary channel of the first one, which then leads to a characterization of the channel's capacities in terms of the capacities of the complementary channel.
A similar idea also underlies the so-called approximate degradability introduced by Sutter et.~al~\cite{sutter2017approximate}, which we discuss in more detail in Sec.~\ref{sec:approximate-degradability}.

In this work, we consider the approximate partial orders summarized in Table~\ref{tab:orders}, which were recently introduced in \cite{HRS20}.
Generally speaking, the less noisy orders are based on mixed states, i.e., either the set of mixed quantum states $\rho_{AA'}$ or classical-quantum states
\begin{align}
	\rho_{UA}=\sum_u p(u) |u\rangle\langle u| \otimes \rho^u_A, \label{EQ:rUA}
\end{align}
where each $\rho_A^u$ is a mixed state.
In contrast, the more capable orders are based on pure states, i.e., either the set of pure states $\Psi_{AA'}$ or classical-quantum states
\begin{align}
	\rho_{XA}=\sum_u p(x) |x\rangle\langle x| \otimes \Psi^x_A, \label{EQ:rXA}
\end{align}
where each $\Psi_A^x$ is a pure state.

\begin{table}[t]
	\begin{center}
		\begin{tabular}{l@{\hskip 4\tabcolsep}l@{\hskip 4\tabcolsep}l}
			\toprule
			Name & Entropic formulation & Capacity formulation \\
			\midrule
			$\epsilon$-less noisy & $I(U:E) \leq I(U:B) + \epsilon\quad\forall\rho_{UA}$ & $P^{(1)}(\cN^c) \leq \epsilon$ \\
			$\epsilon$-regularized less noisy & $I(U:E^n) \leq I(U:B^n) + n\epsilon\quad\forall\rho_{UA^n}$ & $P(\cN^c) \leq \epsilon$ \\
			$\epsilon$-fully quantum less noisy & $I(A:E) \leq I(A:B) + \epsilon\quad\forall\rho_{AA'}$ & $P_E(\cN^c) \leq \epsilon$ \\
			$\epsilon$-more capable & $I(X:E) \leq I(X:B) + \epsilon\quad\forall\rho_{XA}$ & $Q^{(1)}(\cN^c) \leq \epsilon$ \\
			$\epsilon$-regularized more capable & $I(X:E^n) \leq I(X:B^n) + n\epsilon\quad\forall\rho_{XA^n}$ & $Q(\cN^c) \leq \epsilon$ \\
			$\epsilon$-fully quantum more capable & $I(A:E) \leq I(A:B) + \epsilon\quad\forall\Psi_{AA'}$ & $Q_E(\cN^c) \leq \epsilon$ \\
			\bottomrule
	\end{tabular}\end{center}
	\caption{\label{tab:orders} Approximate partial orders discussed in this paper. For definitions of the relevant entropic quantities and capacity quantities, see Sec.~\ref{sec:definitions}. The entropic formulations are easily generalized to an arbitrary pair of channels $\cN$ and $\cM$. } 
\end{table}

We will now discuss the capacities of a quantum channel used to formulate our main results.
We focus on entropic formulas for these capacities, and refer to \cite{wilde2013quantum} for detailed operational definitions.

The classical capacity $C(\cN)$ of a quantum channel $\cN$ characterizes the optimal rate of faithful classical information transmission through the channel.
It can be expressed as~\cite{holevo1998capacity,schumacher1997sending}
\begin{align}
C(\cN) &= \lim_{n\rightarrow\infty} \frac1n \chi(\cN^{\otimes n}), \label{eq:C}\\
\chi(\cN) &=  \sup_{\rho_{XA}} I(X:B),\label{eq:holevo} 
\end{align}
where the optimization in \eqref{eq:holevo} is over classical-quantum states $\rho_{XA}$ as defined in Equation~\eqref{EQ:rXA}, which uses the fact that the optimization can be restricted to pure state ensembles \cite{wilde2013quantum}. The mutual information is evaluated on the state $(\id_X\otimes \cN)(\rho_{XA})$.
The quantity $\chi(\cN)$ is called the Holevo quantity.

The entanglement-assisted classical capacity $C_E(\cN)$ is defined as the optimal rate of faithful classical information transmission assisted by unlimited entanglement, and can be expressed as~\cite{bennett1999entanglement} 
\begin{align}
C_E(\cN) &=  \sup_{\Psi_{AA'}} I(A:B),\label{eq:CE}
\end{align}
where the mutual information is evaluated on the state $(\id_A\otimes \cN)(\Psi_{AA'})$.
A significant difference between the formulas \eqref{eq:C} and \eqref{eq:CE} is that the former is regularized (or multi-letter), referring to the limit $n\to\infty$. 
This regularization is necessary since the Holevo information is ``superadditive'': there are channels $\cN$ such that $\chi(\cN^{\otimes n}) > n\chi(\cN)$ for some $n\in\mathbb{N}$ \cite{hastings2009superadditivity}, which renders the classical capacity $C(\cN)$ intractable to compute in general.
Most capacity formulas for quantum channels suffer from this regularization problem, which creates the need for methods to bound these capacities; this is a main motivation for the present work as well.
In contrast to the formula for the classical capacity, the expression \eqref{eq:CE} for the entanglement-assisted classical capacity is a so-called single-letter formula that can be computed efficiently \cite{wilde2013quantum,fawzi2018efficient}.

The quantum capacity $Q(\cN)$ characterizes the optimal rate of faithful quantum information transmission through a channel.
It can be expressed as~\cite{BarnumNielsenEA98,BarnumKnillEA00,lloyd1997capacity,shor2002quantum,devetak2005private}
\begin{align}
Q(\cN) &= \lim_{n\rightarrow\infty} \frac1n Q^{(1)}(\cN^{\otimes n}), \label{eq:Q}\\
Q^{(1)}(\cN) &=  \sup_{\rho_{XA}} \lbrace I(X:B) - I(X:E) \rbrace \label{eq:coherent-information}\\
 &= \sup_{\Psi_{AA'}} I(A\rangle B),\label{eq:coherent-information-compact}
\end{align}
where the optimization in \eqref{eq:coherent-information} is over classical-quantum states of the form \eqref{EQ:rXA} with pure ensemble states, and the mutual informations $I(X:B)$ and $I(X:E)$ are evaluated on the states $(\id_X\otimes \cN)(\rho_{XA})$ and $(\id_X\otimes \cN^c)(\rho_{XA})$, respectively.
The alternative expression in \eqref{eq:coherent-information-compact} uses the coherent information $I(A\rangle B) = H(B) - H(AB)$.

The private capacity $P(\cN)$ characterizes the optimal rate of faithful private information transmission through a quantum channel, and can be expressed as~\cite{cai2004quantum,devetak2005private}
\begin{align}
P(\cN) &= \lim_{n\rightarrow\infty} \frac1n P^{(1)}(\cN^{\otimes n}), \label{eq:P} \\
P^{(1)}(\cN) &=  \sup_{\rho_{UA}} \lbrace I(U:B) - I(U:E) \rbrace, \label{eq:private-information}
\end{align}
where the optimization in the last line is over classical-quantum states as in Equation~\eqref{EQ:rUA} with mixed ensemble states, and the mutual informations $I(U:B)$ and $I(U:E)$ are evaluated on the states $(\id_U\otimes \cN)(\rho_{UA})$ and $(\id_U\otimes \cN^c)(\rho_{UA})$, respectively.

Similar to the classical capacity above, both the regularizations in the quantum capacity formula \eqref{eq:Q} and in the private capacity formula \eqref{eq:P} are necessary as well because of superadditivity of the underlying information quantities $Q^{(1)}(\cdot)$ and $P^{(1)}(\cdot)$~\cite{SS96,DSS98,FernWhaley08,SmithSmolin07,SRS08,LeditzkyLeungEA18,BauschLeditzky20,BauschLeditzky19,SiddhuGriffiths20,Siddhu20,Siddhu21}.
The coding theorems for the quantum and private capacity state that the single-letter information quantities are achievable lower bounds on the true capacities: For every quantum channel $\cN$,
\begin{align}
	Q^{(1)}(\cN) &\leq Q(\cN) & P^{(1)}(\cN) &\leq P(\cN).
\end{align}
However, because of the superadditivity results mentioned above it is not at all clear how large the gap in these inequalities can be.
Our capacity bounds in Cor.~\ref{Cor:CapBounds} imply that this gap is controlled by the corresponding capacity of the complementary channel, generalizing the results by Watanabe \cite{watanabe2012private} which in turn extended prior additivity results for degradable and antidegradable channels \cite{devetak2005capacity,smith2008private}.

Operationally, faithful quantum information transmission is necessarily private, and faithful private information transmission is a particular form of faithful classical information transmission.
These observations translate to the following capacity inequalities valid for any quantum channel $\cN$:
\begin{align}
	Q(\cN) \leq P(\cN) \leq C(\cN).
\end{align}
In this work we are particularly interested in the first inequality, and whether there is a gap between the quantum and private capacity.
Only a few channels with a strict separation between $Q$ and $P$ are known, among them the Horodecki channel \cite{HorodeckiHorodeckiEA05,horodecki2008low,HorodeckiHorodeckiEA09,OzolsSmithSmolin13}, the `half-rocket' channel \cite{LeungLiEA14}, and the recently introduced `platypus' channel \cite{Siddhu21,leditzky2022platypus,leditzky2022generic}.
On the other hand, Watanabe~\cite{watanabe2012private} gave sufficient criteria implying $Q(\cN) = P(\cN)$, which was previously known for degradable channels~\cite{smith2008private} and antidegradable channels (for which both capacities vanish).
One of our main results (Cor.~\ref{Cor:CapBounds}) gives a quantitative bound on the separation between $Q$ and $P$ that generalizes the result of \cite{watanabe2012private}.

Finally, we introduce two additional quantities: the entanglement-assisted private information~\cite{qi2018entanglement} 
\begin{align}
P_E(\cN) &=  \sup_{\rho_{AA'}} \lbrace I(A:B) - I(A:E) \rbrace,\label{eq:EA-priv}
\end{align}
where the mutual informations $I(A:B)$ and $I(A:E)$ are evaluated on the states $(\id_A\otimes \cN)(\rho_{AA'})$ and $(\id_A\otimes \cN^c)(\rho_{AA'})$, respectively, and its restriction to pure states,
\begin{align}
Q_E(\cN) &=  \sup_{\Psi_{AA'}} \lbrace I(A:B) - I(A:E) \rbrace. 
\end{align}
It was shown in~\cite{qi2018entanglement} that for degradable channels $P_E(\cN) = Q_E(\cN)$, and then both correspond to the entanglement-assisted private capacity of the degraded channel $\cN$. We will further expand on this comment at the end of this section. Also, if one desires an upper bound that has an operational interpretation for all channels, observe that
\begin{align} 
P^{(1)}(\cN) \leq P_E(\cN) \leq 2 Q_{ss}(\cN),\label{eq:P_E-Qss}
\end{align}
where $Q_{ss}(\cdot)$ is the quantum capacity with symmetric side channel assistance \cite{smith2008quantum}.
It can be defined as $Q_{ss}(\cN) = \sup_{d} Q^{(1)}(\cN\otimes \cA_d)$, where $\cA_d$ is a symmetric channel with $d(d+1)/2$-dimensional input and $d$-dimensional output, and zero quantum capacity by itself, $Q(\cA_d) = 0$ for all $d$. 
We discuss $Q_{ss}(\cdot)$ in more detail in Sec.~\ref{Sec:sidechannel}.
A proof of the second inequality in \eqref{eq:P_E-Qss} is provided in \Cref{sec:Qss-app}.

Before we start exploring the desired capacity bounds, we make a useful observation regarding the fully quantum more capable order and its associated capacity formula.
\begin{lemma}\label{Lem:QEQ1}
For a quantum channel $\cN$, we have
\begin{align}
Q_E(\cN) = 2 Q^{(1)}(\cN),
\end{align}
and therefore,
\begin{align}
\text{$\cN$ is $\epsilon$-more capable} \Leftrightarrow \text{$\cN$ is $2\epsilon$-fully quantum more capable.}
\end{align}
\end{lemma}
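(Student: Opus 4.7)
The plan is to exploit the purity of the Stinespring dilation of $\cN$ acting on a pure input. Let $V \colon \cH_{A'} \to \cH_B \otimes \cH_E$ be a Stinespring isometry for $\cN$, so that $\cN(\cdot) = \tr_E(V \cdot V^\dagger)$ and $\cN^c(\cdot) = \tr_B(V \cdot V^\dagger)$. For any pure input $\Psi_{AA'} = |\Psi\rangle\langle\Psi|_{AA'}$, define the pure tripartite state $|\phi\rangle_{ABE} = (\openone_A \otimes V)|\Psi\rangle_{AA'}$, with marginals $\phi_{AB} = (\id_A \otimes \cN)(\Psi_{AA'})$ and $\phi_{AE} = (\id_A \otimes \cN^c)(\Psi_{AA'})$; these are exactly the states on which $I(A:B)$ and $I(A:E)$ in the definition of $Q_E(\cN)$ are evaluated.

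Next I would use the fact that $|\phi\rangle_{ABE}$ is pure to get the Schmidt-type entropy identities $H(AB)_\phi = H(E)_\phi$ and $H(AE)_\phi = H(B)_\phi$. Substituting these into the mutual information $I(A:E)_\phi = H(A)_\phi + H(E)_\phi - H(AE)_\phi = H(A)_\phi + H(AB)_\phi - H(B)_\phi$ and subtracting from $I(A:B)_\phi = H(A)_\phi + H(B)_\phi - H(AB)_\phi$ yields
\begin{align}
I(A:B)_\phi - I(A:E)_\phi = 2\bigl(H(B)_\phi - H(AB)_\phi\bigr) = 2\, I(A\rangle B)_\phi.
\end{align}
Taking the supremum over pure states $\Psi_{AA'}$ on both sides and using the representation \eqref{eq:coherent-information-compact} of $Q^{(1)}(\cN)$ gives $Q_E(\cN) = 2 Q^{(1)}(\cN)$.

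For the equivalence of the two approximate partial orders, I would simply apply this identity to the complementary channel $\cN^c$ and read off from Table~\ref{tab:orders} that $\cN$ being $\epsilon$-more capable means $Q^{(1)}(\cN^c) \leq \epsilon$, while $\cN$ being $2\epsilon$-fully quantum more capable means $Q_E(\cN^c) \leq 2\epsilon$; these two conditions coincide by the identity just proved.

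There is essentially no obstacle here beyond correctly tracking the purity relations among the marginals of $|\phi\rangle_{ABE}$; the fact that the optimization in $Q^{(1)}(\cN)$ can be restricted to pure inputs (as used in \eqref{eq:coherent-information-compact}) is what makes the supremum over the same domain as $Q_E(\cN)$ possible, so no separate argument about convexity or reduction to pure ensembles is needed.
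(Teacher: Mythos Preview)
Your proof is correct and follows essentially the same approach as the paper: both exploit the purity of the Stinespring state $\phi_{ABE}$ to obtain the entropy identities $H(AB)=H(E)$ and $H(AE)=H(B)$, from which $I(A:B)-I(A:E)=2I(A\rangle B)$ follows by direct computation. The second statement is then immediate from the capacity formulations of the two orders, exactly as in the paper.
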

\begin{proof}
Let $V\colon \cH_A\to\cH_B\otimes \cH_E$ be a Stinespring isometry of the channel $\cN\colon A\to B$. 
For an arbitrary pure state $\Psi_{AA'}$ and $\Psi_{ABE}=V \Psi_{AA'}V^\dagger$, we have
\begin{align}
I(A\rangle B) &= H(B) - H(AB) = \frac12 \left(H(B) + H(AE) - H(AB) - H(E)\right) = \frac12 \left( I(A:B) -  I(A:E)\right). 
\end{align}
This holds for every pure state $\Psi_{AA'}$, and hence proves the first statement. The second statement then follows by definition of the orders. 
\end{proof}

\subsection{Capacity bounds} \label{sec:bounds}

We start by discussing the classical capacities of a quantum channel as a warm-up.
\begin{theorem}\label{Thm:ClassicalBounds}
For a quantum channel $\cN$, we have
\begin{alignat}{2}
Q^{(1)}(\cN) &\leq \chi(\cN)  &&\leq Q^{(1)}(\cN) + \chi(\cN^c) \label{eq:holevo-coh-bound}\\
Q(\cN) &\leq C(\cN)  &&\leq Q(\cN) + C(\cN^c) \label{eq:C-Q-bound}\\
2Q^{(1)}(\cN) &\leq C_E(\cN) &&\leq Q_E(\cN) + C_E(\cN^c)  = 2Q^{(1)}(\cN) + C_E(\cN^c)
\end{alignat}
\end{theorem}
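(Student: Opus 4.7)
The plan is to prove the three lines separately, with the lower bounds coming directly from non-negativity of mutual information (together with \Cref{Lem:QEQ1}) and the upper bounds following from a simple add-and-subtract trick applied to the quantity being bounded.

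First I would handle the lower bounds. The formulas \eqref{eq:coherent-information} and \eqref{eq:holevo} are both suprema over classical-quantum states of the form \eqref{EQ:rXA} with pure ensemble members, and since $I(X:E)\geq 0$ on any such state we obtain $Q^{(1)}(\cN)\leq \chi(\cN)$. Applying this bound to $\cN^{\otimes n}$, dividing by $n$ and letting $n\to\infty$ yields $Q(\cN)\leq C(\cN)$. The entanglement-assisted lower bound is analogous: $Q_E(\cN)=\sup_{\Psi_{AA'}}\{I(A:B)-I(A:E)\}\leq \sup_{\Psi_{AA'}} I(A:B)=C_E(\cN)$, and the identification $Q_E(\cN)=2Q^{(1)}(\cN)$ is exactly the content of \Cref{Lem:QEQ1}.

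For the upper bound \eqref{eq:holevo-coh-bound}, I would fix a Stinespring dilation $V\colon A\to BE$ and, for an arbitrary $\rho_{XA}$ of the form \eqref{EQ:rXA}, write
\begin{align*}
I(X:B) = \bigl(I(X:B) - I(X:E)\bigr) + I(X:E) \leq Q^{(1)}(\cN) + I(X:E) \leq Q^{(1)}(\cN) + \chi(\cN^c),
\end{align*}
where the last inequality holds because $I(X:E)$ evaluated on $\rho_{XA}$ is precisely the Holevo quantity of $\cN^c$ on the same ensemble. Taking the supremum over $\rho_{XA}$ gives \eqref{eq:holevo-coh-bound}. Then \eqref{eq:C-Q-bound} follows by applying this single-letter bound to $\cN^{\otimes n}$, using $(\cN^{\otimes n})^c=(\cN^c)^{\otimes n}$ (up to an isometry on the environment), dividing by $n$, and passing to the limit $n\to\infty$.

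Finally, for the entanglement-assisted bound I would perform the same add-and-subtract on a pure state $\Psi_{AA'}$:
\begin{align*}
I(A:B) = \bigl(I(A:B) - I(A:E)\bigr) + I(A:E) \leq Q_E(\cN) + I(A:E) \leq Q_E(\cN) + C_E(\cN^c),
\end{align*}
take the supremum over $\Psi_{AA'}$, and rewrite $Q_E(\cN)=2Q^{(1)}(\cN)$ via \Cref{Lem:QEQ1}. There is no real obstacle here; the one point worth checking is that the optimization domains match in each case — pure ensembles \eqref{EQ:rXA} for the pair $(\chi,Q^{(1)})$ and pure bipartite states for the pair $(C_E,Q_E)$ — which ensures the supremum distributes cleanly across the add-and-subtract step.
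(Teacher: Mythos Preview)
Your proposal is correct and follows essentially the same approach as the paper: the lower bounds are dismissed as well known (you spell them out via non-negativity of mutual information and \Cref{Lem:QEQ1}), and the upper bounds are obtained by the identical add-and-subtract trick $I(X:B)=\bigl(I(X:B)-I(X:E)\bigr)+I(X:E)$ (resp.\ with $A$ in place of $X$), followed by regularization for \eqref{eq:C-Q-bound} and an appeal to \Cref{Lem:QEQ1} for the entanglement-assisted line. The only cosmetic difference is that the paper fixes the optimizer for $\chi(\cN)$ first, whereas you bound for arbitrary $\rho_{XA}$ and then take the supremum; these are equivalent.
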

\begin{proof} 
For each statement the first inequality is well known and is meant for comparison. The second inequality in the first statement follows by picking the optimal classical-quantum state $\rho_{XA}$ (defined in terms of a pure-state ensemble) for $\chi(\cN)$, and noting that
\begin{align}
\chi(\cN) &= I(X:B) \\
&= I(X:B) - I(X:E) + I(X:E) \\
&\leq Q^{(1)}(\cN) + \chi(\cN^c),
\end{align}
where the entropies are evaluated on the state $(I_X\otimes U_\cN)\rho_{XA}(I_X\otimes U_\cN)^\dagger$, with $U_\cN\colon \ch_A\to \cH_B\otimes \cH_E$ a Stinespring isometry for $\cN$.
The second statement follows from the first by regularizing. The third statement follows similarly to the first, using Lemma~\ref{Lem:QEQ1} for the last equality. 
\end{proof}
To make the connection to partial orders, one can note the following as direct consequences: If a channel $\cN$ is anti-more capable, we immediately have
\begin{align}
\chi(\cN)  &\leq \chi(\cN^c) \\
C_E(\cN) &\leq C_E(\cN^c).
\end{align}
Similarly, if $\cN$ is anti-regularized more capable,
\begin{align}
C(\cN)  \leq C(\cN^c).
\end{align}
Thm.~\ref{Thm:ClassicalBounds} gives our first simple bounds, and exemplifies the intuition that capacities are limited by the usefulness of the channel's complement. 

We will now consider the more interesting case of quantum capacities of quantum channels. In~\cite{HRS20} the approximate partial orders defined at the beginning of the section were used to proof a quantitative version of the previous results by Watanabe~\cite{watanabe2012private}. Those results will serve as starting point. 
\begin{theorem}[\cite{HRS20}]\label{theoboundscapacities}
Let $\cn$ be a quantum channel.
\begin{itemize}
	\item[(i)] If $\cn$ is $\epsilon$-more capable, then $Q^{(1)}(\cn) \leq P^{(1)}(\cn) \leq Q^{(1)}(\cn) + \epsilon$.
	\item [(ii)]	If $\cn$ is $\epsilon$-regularized more capable, then $	Q(\cn) \leq P(\cn) \leq Q(\cn) + \epsilon$.
	\item [(iii)]	If $\cn$ is $\epsilon$-fully quantum less noisy, then 
	$Q^{(1)}(\cn) \leq Q(\cn) \leq Q^{(1)}(\cn) + \epsilon.$
	\item [(iv)]	If $\cn$ is $\epsilon$-fully quantum less noisy and $\epsilon$-regularized more capable, then 
	$ P^{(1)}(\cn) \leq P(\cn) \leq P^{(1)}(\cn) + 2\epsilon.$
\end{itemize}
\end{theorem}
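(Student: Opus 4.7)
The plan is to treat items (i) and (iii) as the two main technical steps and derive (ii) and (iv) by routine tensorization and combination.

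For (i), I would reduce the private information to the coherent information through a refinement of the cq-state ensemble. Take a near-optimal state $\rho_{UA}=\sum_u p(u)\,|u\rangle\!\langle u|\otimes\rho^u_A$ for $P^{(1)}(\cN)$, and decompose each mixed ensemble state into a pure-state ensemble $\rho^u_A=\sum_x p(x|u)\,\Psi^{u,x}_A$. Combining these into an extended cq-state $\rho_{UXA}$, the marginal $\rho_{(UX)A}$ is a valid pure-ensemble input for $Q^{(1)}(\cN)$. Two applications of the chain rule give
\[
I(U:B)-I(U:E) = \bigl[I(UX:B)-I(UX:E)\bigr] + \bigl[I(X:E|U)-I(X:B|U)\bigr].
\]
The first bracket is at most $Q^{(1)}(\cN)$. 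The second bracket is at most $\epsilon$: for each fixed $U=u$, the conditional ensemble $\{p(x|u),\Psi^{u,x}_A\}_x$ is itself a pure-ensemble cq state to which the $\epsilon$-more capable condition applies, and averaging over $u$ yields the bound. Supremizing over $\rho_{UA}$ proves (i). For (ii), the defining relation of $\epsilon$-regularized more capable is exactly the $n\epsilon$-more capable condition for $\cN^{\otimes n}$; applying (i) to $\cN^{\otimes n}$, dividing by $n$, and sending $n\to\infty$ yields $P(\cN)\le Q(\cN)+\epsilon$.

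For (iii), the goal is $Q(\cN)\le Q^{(1)}(\cN)+\epsilon$ whenever $P_E(\cN^c)\le\epsilon$. Fix $n$ and a pure input $\Psi_{AA^n}$, and expand $I(A\rangle B^n) = \sum_i[H(B_i|B^{<i})-H(E_i|E^{<i})]$ via the chain rule. I would split each summand as $[H(B_i|B^{<i})-H(E_i|B^{<i})]+[H(E_i|B^{<i})-H(E_i|E^{<i})]$. For the first (``single-letter'') piece, use the identity
\[
H(B_i|B^{<i})-H(E_i|B^{<i}) = \bigl[H(B_i)-H(E_i)\bigr] - \bigl[I(B^{<i}:B_i)-I(B^{<i}:E_i)\bigr];
\]
the first difference is a pure-input coherent information of $\cN$ (obtained by purifying the reduced input state on $A'_i$), hence bounded by $Q^{(1)}(\cN)$, while the second difference is lower bounded by $-\epsilon$ via the mixed-reference FQLN condition with $R=B^{<i}$ applied to channel $i$. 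This gives $H(B_i|B^{<i})-H(E_i|B^{<i})\le Q^{(1)}(\cN)+\epsilon$. The second (``conditioning-swap'') piece requires controlling $\sum_i[H(E_i|B^{<i})-H(E_i|E^{<i})]$ up to $O(n\epsilon)$ by telescoping the swap $B^{<i}\to E^{<i}$ through the $i-1$ prior channel uses, applying FQLN at each step with a carefully chosen reference system that includes $E_i$ together with the as-yet-unswapped conditioning systems.

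Finally, (iv) follows by combining (ii) and (iii) with the trivial $Q^{(1)}(\cN)\le P^{(1)}(\cN)$:
\[
P(\cN)\le Q(\cN)+\epsilon\le Q^{(1)}(\cN)+2\epsilon\le P^{(1)}(\cN)+2\epsilon.
\]
The main obstacle is the conditioning-swap step of (iii). Since $\epsilon$-FQLN is a single-letter property of $\cN$ and does not automatically tensor to $n\epsilon$-FQLN of $\cN^{\otimes n}$ (as $P_E$ is not known to be additive in the relevant sense), one cannot apply FQLN once globally to the $n$-fold channel; instead, the telescoping must combine $n$ applications of FQLN and deal with cross terms of the form $I(B^{<i}E^{>i}:E_i)-I(B^{<i}E^{>i}:B_i)$, with the choice of reference at each step chosen to ensure that the total error sums to at most $n\epsilon$ rather than growing in an uncontrolled way.
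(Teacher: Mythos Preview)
Your treatment of (i), (ii), and (iv) is correct and matches the paper's approach (see Appendix~A.1 for the identical chain-rule/refinement argument underlying (i), and the combination $P\le Q+\epsilon\le Q^{(1)}+2\epsilon\le P^{(1)}+2\epsilon$ for (iv)).

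The gap is in (iii). Your decomposition $H(B^n)-H(E^n)=\sum_i\bigl[H(B_i|B^{<i})-H(E_i|E^{<i})\bigr]$ splits each summand into a ``single-letter'' piece plus a ``conditioning-swap'' piece $H(E_i|B^{<i})-H(E_i|E^{<i})$, and you already spend one application of FQLN per $i$ on the first piece. The swap piece equals $I(E^{<i}:E_i)-I(B^{<i}:E_i)$, which compares two \emph{different} references against the \emph{same} output $E_i$; this is not the shape of the FQLN inequality, and telescoping it further (swapping $B_j\to E_j$ inside the conditioning one coordinate at a time) costs $i-1$ additional FQLN applications for each $i$, yielding $O(n^2)\epsilon$ rather than $n\epsilon$. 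Nothing in your sketch explains how to avoid this blow-up, and since $\epsilon$-FQLN is genuinely single-letter (you correctly note $P_E$ is not known to tensorize), there is no shortcut via a global application.

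The fix is to choose a better telescoping from the start, as in the paper's Appendix~A.2 (following \cite{cross2017uniform,LS08}). Write
\[
H(B^n)-H(E^n)=\sum_{i=1}^n\Bigl[H\bigl(E_1^{i-1}B_iB_{i+1}^n\bigr)-H\bigl(E_1^{i-1}E_iB_{i+1}^n\bigr)\Bigr],
\]
so that each term already has matched conditioning $R_i\coloneqq E_1^{i-1}B_{i+1}^n$ on both sides. Then
\[
H(B_iR_i)-H(E_iR_i)=\bigl[H(B_i)-H(E_i)\bigr]+\bigl[I(R_i:E_i)-I(R_i:B_i)\bigr]\le Q^{(1)}(\cN)+\epsilon,
\]
using exactly one FQLN application per $i$ with reference $R_i$. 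Summing gives $Q^{(1)}(\cN^{\otimes n})\le nQ^{(1)}(\cN)+n\epsilon$, and regularizing proves (iii) with the correct constant. In other words, the ``conditioning-swap'' obstacle disappears entirely once you interleave $E$'s and $B$'s in the intermediate entropies rather than separating them.
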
 
Here, we record the following simple but important observation: for any quantum channel $\cN$, approximate partial orders can always be satisfied when considering the approximation parameters in terms of capacities of the complementary channel $\cN^c$.  For example, every channel is $\epsilon$-regularized more capable if we choose $\epsilon = Q(\cN^c)$, and similarly for the other orders. This immediately leads us to the following result.
\begin{corollary}\label{Cor:CapBounds}
For a quantum channel $\cn$, we have
\begin{alignat}{3}
Q^{(1)}(\cn) &\leq P^{(1)}(\cn) &&\leq Q^{(1)}(\cn) &&+ Q^{(1)}(\cn^c), \label{Eq:P1Q1} \\
Q(\cn) &\leq P(\cn) &&\leq Q(\cn) &&+ Q(\cn^c), \label{Eq:PQ} \\
Q^{(1)}(\cn) &\leq Q(\cn) &&\leq Q^{(1)}(\cn) &&+ P_E(\cN^c),  \label{Eq:QQPE} \\
P^{(1)}(\cn) &\leq P(\cn) &&\leq P^{(1)}(\cn) &&+ Q(\cN^c) + P_E(\cN^c).
\end{alignat}
\end{corollary}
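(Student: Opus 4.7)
The key observation is built into the right column of Table~\ref{tab:orders}: for every quantum channel $\cN$, the ``capacity formulation'' of each approximate order is automatically satisfied when we take $\epsilon$ equal to the stated capacity of the complementary channel. In particular, $\cN$ is trivially $Q^{(1)}(\cN^c)$-more capable, $Q(\cN^c)$-regularized more capable, and $P_E(\cN^c)$-fully quantum less noisy. So the whole corollary should drop out of Theorem~\ref{theoboundscapacities} simply by choosing these canonical values of $\epsilon$; no new estimate is required.

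Concretely, my plan is to handle each of the four inequalities in turn. For \eqref{Eq:P1Q1}, I invoke Theorem~\ref{theoboundscapacities}(i) with $\epsilon=Q^{(1)}(\cN^c)$, which by the third row of Table~\ref{tab:orders} is admissible for every channel; this yields $P^{(1)}(\cN)\le Q^{(1)}(\cN)+Q^{(1)}(\cN^c)$, and the lower bound $Q^{(1)}(\cN)\le P^{(1)}(\cN)$ is the standard coding inequality. The second bound \eqref{Eq:PQ} is the same maneuver applied to Theorem~\ref{theoboundscapacities}(ii) with $\epsilon=Q(\cN^c)$. The third bound \eqref{Eq:QQPE} uses Theorem~\ref{theoboundscapacities}(iii) with $\epsilon=P_E(\cN^c)$.

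The only line that is not quite a verbatim substitution is the last one. If I were to apply Theorem~\ref{theoboundscapacities}(iv) directly with a single $\epsilon$, I would be forced to set $\epsilon=\max\{Q(\cN^c),P_E(\cN^c)\}$, giving a bound of $2\max\{Q(\cN^c),P_E(\cN^c)\}$, which is strictly weaker than $Q(\cN^c)+P_E(\cN^c)$. The easy fix is to avoid (iv) altogether and instead chain (ii) and (iii): combining the already-proved inequality $Q(\cN)\le Q^{(1)}(\cN)+P_E(\cN^c)$ with $P(\cN)\le Q(\cN)+Q(\cN^c)$ and then using $Q^{(1)}(\cN)\le P^{(1)}(\cN)$ yields
\begin{align*}
P(\cN)\;\le\; Q(\cN)+Q(\cN^c)\;\le\; Q^{(1)}(\cN)+P_E(\cN^c)+Q(\cN^c)\;\le\; P^{(1)}(\cN)+Q(\cN^c)+P_E(\cN^c).
\end{align*}
This recovers the stated bound, with the lower bound $P^{(1)}(\cN)\le P(\cN)$ again being the standard coding inequality.

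\textbf{Expected obstacle.} There is essentially none; the corollary is a packaging of Theorem~\ref{theoboundscapacities} via the capacity-formulation column of Table~\ref{tab:orders}. The only mild subtlety is recognizing that (iv) should not be applied directly (which would give a factor-of-two loss) but should instead be bypassed by chaining (ii) and (iii), so that the two separate approximation parameters $Q(\cN^c)$ and $P_E(\cN^c)$ each contribute only once to the final bound.
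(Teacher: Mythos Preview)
Your proposal is correct and matches the paper's approach exactly: the corollary is obtained from Theorem~\ref{theoboundscapacities} by plugging in the canonical $\epsilon$-values coming from the capacity column of Table~\ref{tab:orders}. Your explicit treatment of the fourth line---chaining (ii) and (iii) rather than invoking (iv) with a single $\epsilon$---is precisely the refinement needed to land on $Q(\cN^c)+P_E(\cN^c)$ rather than the looser $2\max\{Q(\cN^c),P_E(\cN^c)\}$, and is implicit in the paper's one-line proof.
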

\begin{proof}
They are a direct consequence of Theorem~\ref{theoboundscapacities} and the previous observations.
An alternative proof of eqs.~\eqref{Eq:P1Q1} and \eqref{Eq:PQ} is given in \Cref{sec:P-bound-app}, and of the inequality \eqref{eq:Qss-bound} in \Cref{sec:Qss-app}.
\end{proof}
The corollary gives operationally meaningful bounds on the maximal difference between the private and the quantum capacity and the possible advantage to be gained from regularizing the information quantities $Q^{(1)}$ and $P^{(1)}$. 

Although we are mostly concerned with upper bounds in this work, we mention here that a similar idea can also be used to detect differences between the capacities. To this end, Watanabe~\cite{watanabe2012private} proved that a channel being more capable is often also a necessary condition for the private information and the coherent information of a channel to be equal. The following is essentially~\cite[Proposition 2]{watanabe2012private} restated in the language of this work. 
\begin{corollary} 
Let $\rho^*_A$ be the optimal state achieving $Q^{(1)}(\cn)$. If $\rho^*_A$ is full rank and $Q^{(1)}(\cn^c)>0$, then
\begin{align}
P^{(1)}(\cn) > Q^{(1)}(\cn). 
\end{align}
If $|A|=2$ and $P^{(1)}(\cn) >0$, then $Q^{(1)}(\cn^c)=0$ if and only if 
\begin{align}
P^{(1)}(\cn) = Q^{(1)}(\cn). 
\end{align}
\end{corollary}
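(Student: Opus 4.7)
My plan is to follow Watanabe's approach from~\cite{watanabe2012private}. I will first introduce the functional $f(\rho) := H(\cN(\rho)) - H(\cN^c(\rho))$, so that for any ensemble $\{p(u), \rho_u\}$ with average $\bar\rho$, one has $I(U:B) - I(U:E) = f(\bar\rho) - \sum_u p(u) f(\rho_u)$. This gives $Q^{(1)}(\cN) = \max_\rho [f(\rho) - \hat f(\rho)]$ and $P^{(1)}(\cN) = \max_\rho [f(\rho) - \check f(\rho)]$, where $\hat f$ is the convex roof of $f$ over pure-state decompositions and $\check f$ is the convex hull over arbitrary mixed-state decompositions; in particular $\check f \leq \hat f$ and so $P^{(1)} \geq Q^{(1)}$. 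The hypothesis $Q^{(1)}(\cN^c) > 0$ then supplies a pure-state ensemble $\{q(y), \varphi^y\}$ averaging to some state $\sigma$ with $\Delta := \sum_y q(y) f(\varphi^y) - f(\sigma) > 0$, which is the translation of $I(Y:E) - I(Y:B) > 0$ on $\cN^c$ via the purification identity.

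Full-rankness of $\rho^*$ will enter in two crucial ways. First, since $\rho^*$ lies in the interior of the state space, first-order optimality for $Q^{(1)}(\cN) = f(\rho^*) - \hat f(\rho^*)$ yields the KKT condition $\delta(\psi) := D(\cN(\psi)\|\cN(\rho^*)) - D(\cN^c(\psi)\|\cN^c(\rho^*)) \leq Q^{(1)}(\cN)$ for every pure state $\psi$, with equality on the support of any optimal pure decomposition of $\rho^*$. Second, I will consider the mixed-state ensemble $\{((1-\epsilon) p^*(x), \psi^{*x})\}_x \cup \{(\epsilon, \sigma)\}$ formed from an optimal pure decomposition $\{p^*(x), \psi^{*x}\}$ of $\rho^*$ together with the mixed-state element $\sigma$. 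A direct first-order expansion in $\epsilon$ (using $\delta(\rho^*)=0$ to identify the derivative of $f$ at $\rho^*$) yields the value of this ensemble as $Q^{(1)}(\cN) + \epsilon\bigl[\delta(\sigma) - Q^{(1)}(\cN)\bigr] + O(\epsilon^2)$, so the crucial step is to verify $\delta(\sigma) > Q^{(1)}(\cN)$. Using the identity $\delta(\sigma) = \Delta + \sum_y q(y) \delta(\varphi^y)$, this holds as long as the witness $\{\varphi^y\}$ can be chosen within the KKT equality set $\cS := \{\psi \text{ pure} : \delta(\psi) = Q^{(1)}(\cN)\}$, since then $\delta(\sigma) = Q^{(1)}(\cN) + \Delta > Q^{(1)}(\cN)$.

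The hard part will be to argue that, whenever $Q^{(1)}(\cN^c) > 0$ globally, a witness can indeed be chosen inside $\cS$. This is where full-rankness of $\rho^*$ does essential work, because it makes $\cS$ rich enough (through the abundance of optimal pure decompositions of $\rho^*$ in the interior) to accommodate such a witness.

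\textbf{Part 2.} The equivalence under the assumptions $|A|=2$ and $P^{(1)}(\cN) > 0$ will follow directly from Part 1 combined with Corollary~\ref{Cor:CapBounds}. For $(\Rightarrow)$: if $Q^{(1)}(\cN^c) = 0$, inequality~\eqref{Eq:P1Q1} gives $P^{(1)}(\cN) \leq Q^{(1)}(\cN) + 0 = Q^{(1)}(\cN)$, which combined with $P^{(1)} \geq Q^{(1)}$ yields equality. For $(\Leftarrow)$: assuming $P^{(1)}(\cN) = Q^{(1)}(\cN)$, the hypothesis $P^{(1)}(\cN) > 0$ forces $Q^{(1)}(\cN) > 0$, so any optimizer $\rho^*$ of $Q^{(1)}(\cN)$ must be mixed (a pure optimizer would give $\hat f(\rho^*) = f(\rho^*)$ and hence $Q^{(1)}(\cN) = 0$). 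In dimension $|A|=2$, mixed implies full rank, so the contrapositive of Part 1 forces $Q^{(1)}(\cN^c) = 0$.
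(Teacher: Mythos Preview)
Your Part 2 is fine; the issue is in Part 1, where you explicitly flag a ``hard part'' and leave it open. The good news is that this step is not hard---it is in fact trivial once you push the first-order optimality condition one step further. Since $\rho^*$ is a \emph{full-rank} (hence interior) maximizer of $f(\rho)=H(\cN(\rho))-H(\cN^c(\rho))$, the stationarity condition forces the gradient $L:=-\cN^\dagger(\log\cN(\rho^*))+(\cN^c)^\dagger(\log\cN^c(\rho^*))$ to be a scalar multiple of the identity, $L=\lambda I$. Combined with $\delta(\rho^*)=0$ this gives $\lambda=f(\rho^*)=Q^{(1)}(\cN)$, and since $f(\psi)=0$ for every pure $\psi$ one obtains $\delta(\psi)=\tr[\psi L]=\lambda=Q^{(1)}(\cN)$ for \emph{all} pure states, not merely $\delta(\psi)\le Q^{(1)}(\cN)$. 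Thus your equality set $\cS$ is the entire pure-state space and any witness ensemble $\{q(y),\varphi^y\}$ works; your identity $\delta(\sigma)=\Delta+\sum_y q(y)\delta(\varphi^y)$ then immediately yields $\delta(\sigma)=Q^{(1)}(\cN)+\Delta>Q^{(1)}(\cN)$, and your first-order expansion concludes.

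That said, the whole KKT detour is unnecessary. A much shorter argument (which is essentially what Watanabe does, and what the paper invokes by simply citing \cite[Proposition~2]{watanabe2012private}) is to keep the average state fixed at $\rho^*$ rather than perturbing it: since $\rho^*$ is full rank, for small $\epsilon>0$ one can write $\rho^*=(1-\epsilon)\tau+\epsilon\sigma$ with $\tau=(\rho^*-\epsilon\sigma)/(1-\epsilon)$ a valid state. Decompose $\tau$ into pure states $\{r_k,\phi^k\}$ (each with $f(\phi^k)=0$) and use the mixed-state ensemble $\{( (1-\epsilon)r_k,\phi^k)\}_k\cup\{(\epsilon,\sigma)\}$, whose private-information value is exactly $f(\rho^*)-\epsilon f(\sigma)=Q^{(1)}(\cN)+\epsilon\,\Delta>Q^{(1)}(\cN)$. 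No Taylor expansion, no KKT, no $O(\epsilon^2)$ terms. The paper itself does not reprove any of this; it just translates $Q^{(1)}(\cN^c)>0$ into the existence of a state with $I(A\rangle B)<0$ (equivalently $f(\sigma)<0$) and then cites Watanabe directly.
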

\begin{proof}
If $Q^{(1)}(\cn^c)>0$ then there exists at least one state $\rho$ for which $I(A\rangle E)_\rho >0$, and equivalently $I(A\rangle B)_\rho <0$. In this case the conditions for~\cite[Proposition 2]{watanabe2012private} are fulfilled, which proves the first statement. The second statement is a direct translation of the second part of~\cite[Proposition 2]{watanabe2012private}. 
\end{proof}

It was furthermore shown in~\cite{watanabe2012private} that a channel being less noisy is equivalent to concavity of the channel's coherent information. We now give an approximate version of this observation leading to ``approximate'' concavity and convexity results for general quantum channels. 
\begin{lemma}
For quantum states $\rho_A^i$  and a probability distribution $p(i)$, we define $\rho_A=\sum_i p(i) \rho_A^i$. A channel $\cN$ being $\epsilon$-approximate less noisy is equivalent to the statement 
\begin{align}
\sum_i p(i) I(A\rangle B)_{\rho_i} \leq I(A\rangle B)_{\rho} + \epsilon,
\end{align}
where $I(A\rangle B)_{\rho}$ is evaluated on the state $\cN(\Psi_{AA'})$ with $\Psi_{AA'}$ a purification of $\rho_A$. Similarly, a channel $\cN$ being $\epsilon$-approximate anti-less noisy is equivalent to
\begin{align}
\sum_i p(i) I(A\rangle B)_{\rho_i} \geq I(A\rangle B)_{\rho} - \epsilon.
\end{align}
For an arbitrary quantum channel $\cN$, we have
\begin{align}
I(A\rangle B)_{\rho} - P^{(1)}(\cN) \leq \sum_i p(i) I(A\rangle B)_{\rho_i} \leq I(A\rangle B)_{\rho} + P^{(1)}(\cN^c). 
\end{align}
\end{lemma}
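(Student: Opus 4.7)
The plan is to reduce everything to a single algebraic identity relating the concavity defect of the coherent information to a Holevo-like mutual-information difference, and then read off the three claims directly from the definitions in Table~\ref{tab:orders} and the private information formula \eqref{eq:private-information}.

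The first step is to compute $I(A\rangle B)_\sigma$ for an arbitrary (possibly mixed) state $\sigma_A$ via a purification. If $|\Psi\rangle_{AA'}$ purifies $\sigma_A$ and $V\colon \cH_{A'}\to\cH_B\otimes\cH_E$ is a Stinespring isometry of $\cN$, then $|\phi\rangle_{ABE}=(I_A\otimes V)|\Psi\rangle_{AA'}$ is pure, so $H(AB)=H(E)$. Since the purification has $\sigma_{A'}=\sigma_A$, we obtain the identity
\begin{equation}
I(A\rangle B)_\sigma \;=\; H(B)-H(AB) \;=\; H(\cN(\sigma_A))-H(\cN^c(\sigma_A)).
\end{equation}
Applying this to $\rho_A$ and to each $\rho_A^i$ and taking the convex combination gives
\begin{equation}
I(A\rangle B)_\rho-\sum_i p(i)\,I(A\rangle B)_{\rho_i}
\;=\;\bigl[H(\cN(\rho_A))-\textstyle\sum_i p(i) H(\cN(\rho_A^i))\bigr]-\bigl[H(\cN^c(\rho_A))-\textstyle\sum_i p(i) H(\cN^c(\rho_A^i))\bigr].
\end{equation}

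The second step is to recognize the right-hand side as $I(U:B)-I(U:E)$ evaluated on the classical-quantum state $\rho_{UA}=\sum_i p(i)\,|i\rangle\langle i|_U\otimes \rho_A^i$ pushed through $\cN$ and $\cN^c$ respectively, since $H(B|U)=\sum_i p(i)H(\cN(\rho_A^i))$ and $H(B)=H(\cN(\rho_A))$, and analogously for $E$. Therefore the master identity is
\begin{equation}
I(A\rangle B)_\rho-\sum_i p(i)\,I(A\rangle B)_{\rho_i} \;=\; I(U:B)-I(U:E).
\end{equation}

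The third step is to read off all three claims. For the first, $\cN$ being $\epsilon$-less noisy is by Table~\ref{tab:orders} exactly $I(U:E)\le I(U:B)+\epsilon$ for all $\rho_{UA}$ of the form \eqref{EQ:rUA}; via the identity this is $\sum_i p(i)\,I(A\rangle B)_{\rho_i}\le I(A\rangle B)_\rho+\epsilon$ for all convex decompositions of all states $\rho_A$, and conversely any ensemble on the right produces a valid cq-state on the left, giving the equivalence. The anti-less-noisy statement is the same argument with $\cN$ and $\cN^c$ interchanged (noting that flipping $B\leftrightarrow E$ flips the sign of the identity). The general two-sided bound follows from the identity together with the definition \eqref{eq:private-information}: $I(U:B)-I(U:E)\le P^{(1)}(\cN)$ and, by swapping the roles of the channel and its complement, $I(U:E)-I(U:B)\le P^{(1)}(\cN^c)$.

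There is no real obstacle beyond bookkeeping; the only mildly subtle point is making sure the equivalences are genuinely "iff" rather than one-directional, which follows because the map sending an ensemble $\{p(i),\rho_A^i\}$ to the cq-state $\rho_{UA}$ is a bijection between the objects quantified over in the two formulations, so the universal statements are equivalent on the nose.
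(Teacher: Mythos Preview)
Your argument is correct and is precisely the content of \cite[Proposition 3]{watanabe2012private} that the paper defers to: the identity $I(A\rangle B)_\rho-\sum_i p(i)\,I(A\rangle B)_{\rho_i}=I(U:B)-I(U:E)$ (which the paper later records explicitly in the energy-constrained appendix) is the whole proof, and you have reconstructed it and read off the three statements exactly as intended. The only cosmetic point is the line ``the purification has $\sigma_{A'}=\sigma_A$'': strictly speaking an arbitrary purification only gives isospectral marginals, but since $I(A\rangle B)$ is invariant under unitaries on the reference $A$ (equivalently, one may take the canonical purification), this has no effect on the argument.
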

\begin{proof}
The first and second statement follow by adjusting the proof of~\cite[Proposition 3]{watanabe2012private} using the approximate order. The third statement is then a direct consequence of the fact that every channel is $\epsilon$-approximate less noisy with $\epsilon=P^{(1)}(\cN^c)$ and $\epsilon$-approximate anti-less noisy with $\epsilon=P^{(1)}(\cN)$. 
\end{proof}

While we use the fully quantum less noisy order in Theorem~\ref{theoboundscapacities}, it was shown in \cite{watanabe2012private} that for $\epsilon=0$ the same can be proved using the regularized less noisy order. To this end, the author takes a detour using an alternative partial order based on the quantum relative entropy
\begin{align}
	D(\rho\|\sigma) = \begin{cases}
		\tr[\rho(\log\rho - \log\sigma)] & \text{if $\supp\rho \subset \supp\sigma$,}\\
		\infty & \text{otherwise,}
	\end{cases}
\end{align}
where $\supp X \coloneqq \im (\lim_{\alpha\to 0} X^\alpha)$ denotes the support of an operator $X$.
We will for now define the following auxiliary quantities. 
\begin{definition}
For a quantum channel $\cN$ we define the following quantities,
\begin{align}
R^{(1)}(\cN) &= \sup_{\rho_A, \sigma_A} D(\cN(\rho) \| \cN(\sigma)) -  D(\cN^c(\rho) \| \cN^c(\sigma)) , \\
R(\cN) &= \lim_{n\rightarrow\infty} \frac1n R^{(1)}(\cN^{\otimes n}) .
\end{align}
\end{definition}
Going back to the work of~\cite{watanabe2012private}, one can find the following inequality by adjusting their proof, 
\begin{align}
Q^{(1)}(\cn) \leq Q(\cn) \leq Q^{(1)}(\cn) + R(\cN^c). 
\end{align}
The quantity $R(\cN^c)$ is interesting here, because it was shown in~\cite{watanabe2012private} that the condition $R(\cN^c)=0$ is equivalent to $P(\cN^c)=0$, i.e., the partial orders induced by the two quantities are the same. Unfortunately, the same does not hold true for values other than $0$, i.e., the approximate partial orders; see Appendix~\ref{Ap:Relations} for an example.
In summary we can prove the following result.
\begin{theorem}\label{thm:Q-Qss}
For a quantum channel $\cN$ we have
\begin{alignat}{3}
Q^{(1)}(\cn) &\leq Q(\cn) &&\leq Q^{(1)}(\cn) &&+ M(\cn^c) \label{eq:Qss-bound}\\ 
P^{(1)}(\cn) &\leq P(\cn) &&\leq P^{(1)}(\cn) &&+ Q(\cn^c) + M(\cn^c),
\end{alignat}
where $M(\cN^c) = \min\{ R(\cN^c), P_E(\cN^c)\}$. 
\end{theorem}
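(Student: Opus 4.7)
The plan is to obtain each inequality as the minimum of two separately proved bounds: one involving $P_E(\cN^c)$ and one involving $R(\cN^c)$. The bounds with $P_E(\cN^c)$ are already established in Corollary~\ref{Cor:CapBounds}, namely
\begin{align*}
Q(\cN) &\leq Q^{(1)}(\cN) + P_E(\cN^c), \\
P(\cN) &\leq P^{(1)}(\cN) + Q(\cN^c) + P_E(\cN^c).
\end{align*}
What remains is to prove the corresponding bounds with $R(\cN^c)$ in place of $P_E(\cN^c)$, after which the conclusion follows immediately by taking $M(\cN^c) = \min\{R(\cN^c), P_E(\cN^c)\}$.

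For the quantum capacity bound $Q(\cN) \leq Q^{(1)}(\cN) + R(\cN^c)$, I would adapt Watanabe's argument from the less-noisy setting. Concretely, evaluate $Q(\cN)$ on $n$ parallel uses of $\cN$ at an optimal pure-state ensemble input, expand the coherent information $I(A\rangle B^n)$ via a chain rule into a telescoping sum of single-letter terms, and bound each gap by a difference of relative entropies of the form $D(\cN(\rho)\|\cN(\sigma)) - D(\cN^c(\rho)\|\cN^c(\sigma))$ evaluated on the relevant marginals. By definition these differences are controlled by $R^{(1)}$ applied to $\cN$ (equivalently, by the quantity measuring how much more distinguishable outputs become under $\cN$ than under $\cN^c$). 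Summing, normalizing by $n$, and passing to the regularization gives the extra term $R(\cN^c)$.

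For the private capacity bound $P(\cN) \leq P^{(1)}(\cN) + Q(\cN^c) + R(\cN^c)$, I would mirror the decomposition used to prove Theorem~\ref{theoboundscapacities}(iv): write the gap $P(\cN) - P^{(1)}(\cN)$ as the sum of a \emph{fully quantum less noisy type} contribution and a \emph{regularized more capable type} contribution. The second contribution is bounded by $Q(\cN^c)$ as in the proof of eq.~\eqref{Eq:PQ}, so I only need to replace the first contribution's $P_E(\cN^c)$ bound by the sharper $R(\cN^c)$ bound. This is done by applying the telescoping argument from the previous paragraph, but now to the private information $I(U:B^n)-I(U:E^n)$ of a classical-quantum input ensemble, with the difference of relative entropies again providing the per-letter control.

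The main obstacle is making the telescoping step robust enough to work for both coherent and private informations: one must check that the auxiliary pairs of states $(\rho,\sigma)$ produced by the chain-rule expansion are admissible inputs in the definition of $R^{(1)}$, and that the cumulative error across the $n$ uses is captured by the limit $R(\cN^c) = \lim_n \tfrac{1}{n} R^{(1)}((\cN^c)^{\otimes n})$ rather than by $n\cdot R^{(1)}(\cN^c)$ (which would lose a factor upon regularizing). Once this telescoping is verified, combining both routes to the bound and invoking the minimum yields the claimed inequalities with $M(\cN^c)$.
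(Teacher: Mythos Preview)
Your proposal is correct and matches the paper's own approach: the paper likewise derives the theorem by combining the $P_E(\cN^c)$ bounds from Corollary~\ref{Cor:CapBounds} with the bound $Q(\cN)\le Q^{(1)}(\cN)+R(\cN^c)$ obtained ``by adjusting'' Watanabe's argument, and then takes the minimum. One small slip: the relative-entropy differences you want to control are $D(\cN^c(\rho)\|\cN^c(\sigma)) - D(\cN(\rho)\|\cN(\sigma))$, which are bounded by $R^{(1)}(\cN^c)$ (not $R^{(1)}(\cN)$); with that sign corrected your sketch lines up with the paper.
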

Motivated by the above discussion, we make the following conjecture. 
\begin{conjecture}
For a quantum channel $\cN$ we have
\begin{alignat}{3}
Q^{(1)}(\cn) &\leq Q(\cn) &&\leq Q^{(1)}(\cn) &&+ P(\cn^c) \\ 
P^{(1)}(\cn) &\leq P(\cn) &&\leq P^{(1)}(\cn) &&+ Q(\cn^c) + P(\cn^c) \leq P^{(1)}(\cn) + 2P(\cn^c).
\end{alignat}
\end{conjecture}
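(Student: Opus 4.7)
The second inequality of the conjecture follows from the first combined with \eqref{Eq:PQ} and the trivial bound $Q^{(1)}(\cN)\le P^{(1)}(\cN)$, via
\begin{align*}
P(\cN) \le Q(\cN)+Q(\cN^c) \le Q^{(1)}(\cN)+P(\cN^c)+Q(\cN^c) \le P^{(1)}(\cN)+Q(\cN^c)+P(\cN^c),
\end{align*}
so the entire task reduces to proving $Q(\cN)\le Q^{(1)}(\cN)+P(\cN^c)$.

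The natural route is to upgrade Theorem~\ref{theoboundscapacities}(iii) from the fully quantum less noisy hypothesis (controlled by $P_E(\cN^c)$) to the strictly weaker regularized less noisy hypothesis: concretely, to show that if $\cN$ is $\epsilon$-regularized less noisy, then $Q(\cN)\le Q^{(1)}(\cN)+\epsilon$. Taking $\epsilon = P(\cN^c)$ would yield the conjecture, and such a statement is exactly the approximate quantitative analogue of Watanabe's~\cite{watanabe2012private} equivalence $P(\cN^c)=0\Leftrightarrow Q(\cN)=Q^{(1)}(\cN)$.

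The key technical step I envision is a telescoping argument at level $n$. For a pure input $\psi_{AA'^n}$ with Stinespring image $\psi_{AB^nE^n}$ nearly achieving $Q(\cN)=\lim_n \tfrac{1}{n}Q^{(1)}(\cN^{\otimes n})$, write
\begin{align*}
\tfrac{1}{n}\bigl[H(B^n)-H(E^n)\bigr] = \tfrac{1}{n}\sum_{i=1}^n \bigl[H(B_i\mid B^{i-1}) - H(E_i\mid E^{i-1})\bigr],
\end{align*}
split the $i$-th term as a local contribution $H(B_i)-H(E_i)$ (bounded by $Q^{(1)}(\cN)$ after maximizing the marginal input at site $i$) minus a correlation correction $I(B_i:B^{i-1})-I(E_i:E^{i-1})$, and attempt to identify the sum of corrections with a single Holevo-type difference $I(U:E^n)-I(U:B^n)$ on a classical-quantum state $\rho_{UA^n}$ whose register $U$ encodes the telescoping history (step index together with a dephased record of the earlier outputs). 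Under the hypothesis $P(\cN^c)\le\epsilon$, the regularized less noisy condition applied to $\cN^{\otimes n}$ would then bound the total correction by $n\epsilon$; dividing by $n$ and letting $n\to\infty$ would close the argument.

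The main obstacle is that the auxiliary register $U$ arising from such a telescoping is most naturally quantum (essentially a copy of the earlier input systems), which is precisely why the proof of Theorem~\ref{theoboundscapacities}(iii) yields a $P_E$ bound rather than a $P$ bound. Forcing $U$ to be genuinely classical without losing control over the local contributions is delicate, and the Appendix example showing that the $R$-based and $P$-based approximate partial orders genuinely differ already rules out any formal substitution of $P(\cN^c)$ for $R(\cN^c)$ inside Watanabe's original argument. A truly new chain rule — perhaps combining an asymptotic equipartition or continuity argument to smooth between pure and mixed ensemble optimizations with a careful classicalization of the telescoping register — seems necessary to bridge the gap and establish the conjecture in full.
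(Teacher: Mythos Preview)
The statement you are attempting is a \emph{Conjecture} in the paper; there is no proof in the paper to compare against. The paper explicitly leaves it open, motivated by Watanabe's equivalence $P(\cN^c)=0 \Leftrightarrow R(\cN^c)=0$ and the observation that this equivalence breaks down for the approximate orders (Appendix~\ref{Ap:Relations}).

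Your reduction of the second line to the first via \eqref{Eq:PQ} and $Q^{(1)}(\cN)\le P^{(1)}(\cN)$ is correct, so the conjecture does indeed boil down to $Q(\cN)\le Q^{(1)}(\cN)+P(\cN^c)$. Your diagnosis of the obstruction is also accurate and matches the paper's own reasoning: the telescoping/chain-rule argument in Appendix~\ref{sec:Qss-app} produces correction terms of the form $I(E_i:E_1^{i-1}B_{i+1}^n)-I(B_i:E_1^{i-1}B_{i+1}^n)$ in which the reference system is genuinely quantum, which is why one lands on $P_E(\cN^c)$ rather than $P(\cN^c)$; and the erasure-channel counterexample in Appendix~\ref{Ap:Relations} blocks any attempt to simply swap $R(\cN^c)$ for $P(\cN^c)$ inside Watanabe's relative-entropy route.

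What you have written is therefore not a proof but a faithful explanation of why the known techniques stop at Theorem~\ref{thm:Q-Qss} with $M(\cN^c)=\min\{R(\cN^c),P_E(\cN^c)\}$ and do not reach $P(\cN^c)$. Your final paragraph is honest about this. No further comparison is possible, since the paper offers no argument beyond the same considerations you have identified.
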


Finally, we derive bounds that relate the quantities $P_E$ and $Q_E$. 
\begin{theorem}
Let $\cN$ be an $\epsilon$-fully quantum less noisy quantum channel. Then,
\begin{align}
Q_E(\cN) \leq P_E(\cN) \leq Q_E(\cN) + \epsilon. \label{Eq:CEPE}
\end{align}
Therefore, we have for any quantum channel $\cN$ that
\begin{align}
Q_E(\cN) \leq P_E(\cN) \leq Q_E(\cN) + P_E(\cN^c),
\end{align}
and equivalently,
\begin{align}
Q^{(1)}(\cN) \leq \frac12 P_E(\cN) \leq Q^{(1)}(\cN) + \frac12 P_E(\cN^c).
\end{align}
\end{theorem}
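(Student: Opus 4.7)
The plan is to obtain the lower bound $Q_E(\cN) \leq P_E(\cN)$ for free (every pure $\Psi_{AA'}$ is a particular mixed state $\rho_{AA'}$, so the supremum defining $P_E$ is over a larger set), and to prove the upper bound $P_E(\cN) \leq Q_E(\cN) + \epsilon$ by a purification/chain-rule argument. Once that is in hand, the ``therefore'' statement is immediate: every channel $\cN$ is $P_E(\cN^c)$-fully quantum less noisy by definition of the order in Table~\ref{tab:orders}, and the equivalent form follows from $Q_E(\cN) = 2 Q^{(1)}(\cN)$ (Lemma~\ref{Lem:QEQ1}).

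For the upper bound, fix an arbitrary mixed state $\rho_{AA'}$ and purify it to $\Psi_{AA'R}$ by introducing a reference system $R$. Applying a Stinespring isometry of $\cN$ on $A'$ produces a pure state $\Psi_{ARBE}$. Now I would apply the chain rule for mutual information to the joint reference $AR$:
\begin{align}
I(AR:B) &= I(A:B) + I(R:B|A), \\
I(AR:E) &= I(A:E) + I(R:E|A),
\end{align}
and subtract to get
\begin{align}
I(A:B) - I(A:E) = \bigl[I(AR:B) - I(AR:E)\bigr] + \bigl[I(R:E|A) - I(R:B|A)\bigr].
\end{align}
The first bracket is at most $Q_E(\cN)$, because $\Psi_{(AR)A'}$ is pure with reference system $AR$ and channel input $A'$, so it is a feasible state in the definition \eqref{eq:EA-priv} restricted to pure states.

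The key small lemma I would establish next is the pure-state identity
\begin{align}
I(R:E|A) - I(R:B|A) = I(R:E) - I(R:B),
\end{align}
valid on any four-party pure state $\Psi_{ARBE}$. This follows from repeated use of the entropy duality $H(XY) = H(ZW)$ on the pure state (so $H(EA) = H(BR)$, $H(BA) = H(ER)$, $H(REA) = H(B)$, $H(RBA) = H(E)$, $H(RE) = H(AB)$, $H(RB) = H(AE)$); expanding both sides in von Neumann entropies and substituting these equalities makes them coincide. I expect this step to be the main (though modest) obstacle, since it is the one non-routine calculation; the temptation to try ``$\epsilon$-less-noisy'' directly on the conditional mutual informations does not work, so the detour through the unconditional version is what enables the argument.

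With the identity in hand, the state $\rho_{RA'}$ (the marginal of $\Psi_{ARA'}$) is an admissible input for $P_E(\cN^c)$ with reference $R$, and by the $\epsilon$-fully quantum less noisy hypothesis (equivalently $P_E(\cN^c) \leq \epsilon$) one concludes
\begin{align}
I(R:E) - I(R:B) \leq P_E(\cN^c) \leq \epsilon.
\end{align}
Combining the two bounds yields $I(A:B) - I(A:E) \leq Q_E(\cN) + \epsilon$ for every $\rho_{AA'}$, and taking the supremum gives \eqref{Eq:CEPE}. Finally, specializing $\epsilon = P_E(\cN^c)$ gives the general channel bound, and using $Q_E(\cN) = 2 Q^{(1)}(\cN)$ and dividing by two converts it into the stated form with $Q^{(1)}$ and $\tfrac12 P_E$.
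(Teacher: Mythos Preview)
Your proposal is correct and follows essentially the same approach as the paper: purify $\rho_{AA'}$ with a reference $R$, decompose $I(A:B)-I(A:E)$ into the $Q_E$-term $I(RA:B)-I(RA:E)$ plus the remainder $I(R:E)-I(R:B)$, and bound the latter by the $\epsilon$-fully quantum less noisy hypothesis. The only cosmetic difference is that you reach this decomposition via the chain rule plus the pure-state identity $I(R:E|A)-I(R:B|A)=I(R:E)-I(R:B)$, whereas the paper obtains the same decomposition in one shot by directly expanding entropies and using purity of $\Psi_{ABER}$; the mathematical content is identical.
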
 
\begin{proof} 
The first inequality in Equation~\eqref{Eq:CEPE} follows be definition of the two quantities. We now prove the second inequality. Let $\rho_{AA'}$ be the state that achieves the optimal value of $P_E(\cN)$, and let $\Psi_{ABER}=\cN(\Psi_{AA'R})$ where $\Psi_{AA'R}$ is a purification of $\rho_{AA'}$. Observe the following,
\begin{align}
&I(A:B) - I(A:E) \\
&= H(B) - H(AB) - H(E) + H(AE) \\
&= H(B) - H(ER) - H(E) + H(BR) - H(RA) + H(RA) - H(RAB) + H(E) + H(RAE) - H(B)  \\
&= I(RA:B) - I(RA:E) + I(R:E) - I(R:B) \\
&\leq Q_E(\cN) + \epsilon, 
\end{align}
where the second equality makes several uses of the purity of $\Psi_{ABER}$. The inequality follows since the system $AR$ purifies the channel input, and because $\cN$ is $\epsilon$-fully quantum less noisy. 

The next statement of the theorem follows because every channel is $\epsilon$-fully quantum less noisy with $\epsilon=P_E(\cN^c)$. The last statement follows from Lemma~\ref{Lem:QEQ1}. 
\end{proof}
A special case of the above is that, if $\cN$ is fully quantum less noisy, then
\begin{align}
P_E(\cN) = Q_E(\cN) = 2Q(\cN).
\end{align} 
This should be compared to~\cite{qi2018entanglement} where the first equality was shown for the potentially smaller set of degradable channels, but on the other hand in the more general setting of broadcast channels.

\subsection{Comparison to approximate degradability bounds}\label{sec:approximate-degradability}
A particularly useful order for bounding channel capacities is $\epsilon$-degradabiltiy introduced by Sutter et al.~\cite{sutter2017approximate}. 
To define it, we consider the diamond norm of a superoperator $\Phi\colon \cB(\cH_1)\to\cB(\cH_2)$ between the algebras $\cB(\cH_i)$ of linear operators on Hilbert spaces $\cH_1$ and $\cH_2$,
\begin{align}
	\|\Phi\|_\diamond \coloneqq \sup\lbrace \|(\id_{\cH_1}\otimes \Phi)(X)\|_1\colon X\in \cB(\cH_1\otimes \cH_1), \|X\|_1 = 1\rbrace,
\end{align}
where $\|X\|_1 = \tr \sqrt{X^\dagger X}$ denotes the trace norm.
Both the diamond norm and the trace norm can be computed efficiently using semidefinite programming \cite{Wat09,watrous2018theory}.
\begin{definition}[\cite{sutter2017approximate}]\label{def:eps-degraded}
	A channel $\cn$ is said to be an $\epsilon$-degraded version of $\cm$ if there exists a channel $\Theta$ such that $\| \cn - \Theta\circ\cm \|_\diamond \leq \epsilon$. 
\end{definition}
We will show in this section that approximate degradability implies the orders considered in Sec.~\ref{sec:bounds}; as a consequence, our bounds are at least as good as the ones derived in \cite{sutter2017approximate}, and can sometime improve upon them.
A similar discussion can be found in~\cite{HRS20}, but we will add a few new elements leading to slightly improved constants and a simpler derivation. 
We start by defining the following two functions,
\begin{align}
f_1(|E|, \epsilon) &= \frac{\epsilon}{2}\log(|E|-1) + h\left(\frac{\epsilon}{2}\right), \\
f_2(|E|, \epsilon) &= \epsilon\log|E| +\left(1+\frac{\epsilon}{2}\right)  h\left(\frac{\epsilon}{2+\epsilon}\right),
\end{align}
where $h(x)$ is the binary entropy. Note that $f_1(|E|, \epsilon) \leq f_2(|E|, \epsilon)$. As a special case of Def.~\ref{def:eps-degraded}, a channel $\cN$ is called $\epsilon$-degradable if there exists another channel $\cD$ such that
\begin{align}
\| \cN^c - \cD\circ\cN \|_\diamond \leq \epsilon. 
\end{align}
The main results of~\cite{sutter2017approximate} can be stated as follows. 
\begin{theorem}[Theorem 3.4 in~\cite{sutter2017approximate}]\label{Thm:Sutter}
Let $\cN$ be $\epsilon$-degradable, then
\begin{align}
Q^{(1)}(\cN) \leq Q(\cN) &\leq Q^{(1)}(\cN) + f_1(|E|, \epsilon) + f_2(|E|, \epsilon), \\
P^{(1)}(\cN) \leq P(\cN) &\leq P^{(1)}(\cN) + f_1(|E|, \epsilon) + 3 f_2(|E|, \epsilon), \\
Q^{(1)}(\cN) \leq P^{(1)}(\cN) &\leq Q^{(1)}(\cN) + f_1(|E|, \epsilon) + f_2(|E|, \epsilon).  
\end{align}
\end{theorem}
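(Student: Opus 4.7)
The plan is to combine the approximate degradability $\|\cN^c - \cD \circ \cN\|_\diamond \leq \epsilon$ with two standard entropy continuity estimates — the Fannes--Audenaert bound $|H(\rho) - H(\sigma)| \leq f_1(d,\epsilon)$ and the Alicki--Fannes--Winter (AFW) bound $|H(A|B)_\rho - H(A|B)_\sigma| \leq f_2(|A|,\epsilon)$ — both valid when $\|\rho - \sigma\|_1 \leq \epsilon$. By the defining property of the diamond norm, approximate degradability gives $\|(\id_R \otimes \cN^c)(\rho_{RA}) - (\id_R \otimes \cD\circ\cN)(\rho_{RA})\|_1 \leq \epsilon$ for every reference $R$ and input $\rho_{RA}$, so Fannes on $H(E)$ combined with AFW on $H(E|X)$ yields $|I(X{:}E)_\sigma - I(X{:}\tilde E)_{\tilde\sigma}| \leq f_1(|E|,\epsilon) + f_2(|E|,\epsilon)$, where $\tilde E = \cD(B)$.

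First I would establish the single-letter bound $P^{(1)}(\cN) \leq Q^{(1)}(\cN) + f_1 + f_2$ (statement (iii)). Take the optimal classical-quantum state $\rho_{XA}$ (pure ensemble) achieving $P^{(1)}(\cN)$ and use the above continuity to replace $I(X{:}E)$ by $I(X{:}\tilde E)$. The residual quantity $I(X{:}B) - I(X{:}\tilde E)$ is the private information of the ``pseudo-degradable'' setup in which $\cD \circ \cN$ plays the role of complement; Smith's argument for degradable channels \cite{smith2008private} — which relies on the concavity of $\rho \mapsto H(\cN(\rho)) - H(\cD(\cN(\rho)))$, valid whenever $\cD$ is a channel — identifies this with the ``pseudo'' coherent information $\max_\Psi [H(B) - H(\cD(B))]_\Psi$. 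A final comparison with the true coherent information (using the Fannes/AFW estimate on $|H(E) - H(\cD(B))|$ absorbed into the same continuity budget rather than counted twice) bounds the whole expression by $Q^{(1)}(\cN) + f_1 + f_2$.

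For the regularized bounds $Q \leq Q^{(1)} + f_1 + f_2$ (statement (i)) and $P \leq P^{(1)} + f_1 + 3f_2$ (statement (ii)), applying Fannes/AFW naively at the $n$-copy level fails because the perturbation scales as $n\epsilon$ while the joint environment dimension scales as $|E|^n$, producing an overhead of order $n^2\epsilon\log|E|$ — useless after division by $n$. The resolution I would follow is to route through the symmetric-side-channel-assisted capacity $Q_{ss}(\cN) = \sup_d Q^{(1)}(\cN\otimes\cA_d)$, which satisfies $Q \leq Q_{ss}$ unconditionally. Since the diamond norm is stable under tensoring with a CPTP map and $\cA_d$ is self-complementary, $\cN\otimes\cA_d$ remains $\epsilon$-degradable via $\cD\otimes\id_{\cA_d}$, and crucially the perturbation is localized to the $E_\cN$-factor of the joint environment. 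A chain-rule application of AFW that conditions out the unperturbed $E_{\cA_d}$-factor therefore yields the dimension-independent overhead $f_1(|E|,\epsilon)+f_2(|E|,\epsilon)$, producing $Q_{ss}(\cN) \leq Q^{(1)}(\cN) + f_1 + f_2$. The private-capacity bound follows by an analogous argument on $\cN\otimes\cA_d$, with the extra $2f_2$ accounting for two further AFW applications needed to pass from coherent information to the full private information difference.

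The main obstacle is precisely this regularization step: avoiding the $n^2$-scaling requires either a careful per-copy telescoping argument or the detour via $Q_{ss}$, and both depend on exploiting the localized nature of the perturbation on the $E_\cN$-factor so that AFW can be invoked with $|A|=|E|$ rather than $|E|^n$ or $|E|\cdot d$. A secondary technical challenge is the meticulous book-keeping of the Fannes and AFW applications — in particular, handling the subtle overlaps between the continuity corrections that arise for $I(X{:}E)$ and $H(\cD(B)) \approx H(E)$ so that the single-letter bound lands at $f_1 + f_2$ rather than $2f_1 + f_2$, and tracking the extra AFW terms that differentiate the private-capacity constant $f_1 + 3f_2$ from its quantum-capacity counterpart $f_1 + f_2$.
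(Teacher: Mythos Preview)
The paper does not give its own proof of this theorem; it is quoted from \cite{sutter2017approximate}, with only the remark that ``the proofs of these results are reminiscent of \cite{LS08}, and use continuity bounds from~\cite{audenaert2007sharp,AFW} as the main tool.'' There is therefore nothing in the present paper to compare your argument against line by line.

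Your identification of the Fannes--Audenaert and Alicki--Fannes--Winter bounds as the engine is correct and matches that description. However, of the two regularization strategies you mention, the one actually used in \cite{sutter2017approximate} is the per-copy telescoping of $H(B^n)-H(E^n)$ (the same device the present paper adapts in Appendix~\ref{sec:Qss-app} for the bound $Q(\cN)\leq Q^{(1)}(\cN)+P_E(\cN^c)$), not the $Q_{ss}$ detour you develop in more detail. The $Q_{ss}$ route carries a genuine obstacle: to make the continuity error independent of the side-channel dimension $d$ you must chain-rule and condition on the unperturbed $E_{\cA_d}$ factor, which converts the unconditional Fannes application into a conditional AFW one --- so the constant you naturally land on is $2f_2(|E|,\epsilon)$ rather than $f_1(|E|,\epsilon)+f_2(|E|,\epsilon)$ --- and you still owe a separate additivity argument to collapse the ``pseudo-degradable'' coherent information of $\cN\otimes\cA_d$ back to $Q^{(1)}(\cN)$. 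The telescoping approach sidesteps both issues: it works directly on $\cN^{\otimes n}$, applies continuity one tensor factor at a time so the relevant dimension is always $|E|$, and delivers the stated constants without any auxiliary additivity input.
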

The proofs of these results are reminiscent of \cite{LS08}, and use continuity bounds from~\cite{audenaert2007sharp,AFW} as the main tool.
We state these continuity bounds here, adding a third one for classical-quantum states recently proved in~\cite{wilde2020optimal}. 
For two states $\rho$ and $\sigma$ with $\frac12 \|\rho-\sigma\|_1 \leq\epsilon$, it holds that
\begin{align}
| H(A)_\rho - H(A)_\sigma | &\leq f_1(|A|,2\epsilon), \\
| H(A|X)_\rho -H(A|X)_\sigma | &\leq f_1(|A|,2\epsilon), \\
| H(A|B)_\rho -H(A|B)_\sigma | &\leq f_2(|A|,2\epsilon).
\end{align}
It was also shown in~\cite{sutter2017approximate} that, if $\cN$ is $\epsilon$-anti degradable, then
\begin{align}
Q(\cN) \leq P(\cN) \leq f_1(|B|, \epsilon) + f_2(|B|, \epsilon). 
\end{align}
Similarly, we can easily see the following. 
\begin{lemma}
If $\cN$ is $\epsilon$-anti degradable, then
\begin{align}
P_E(\cN) \leq  f_1(|B|, \epsilon) + f_2(|B|, \epsilon), \\
Q^{(1)}(\cN) \leq P^{(1)}(\cN) \leq 2 f_1(|B|, \epsilon). 
\end{align}
\end{lemma}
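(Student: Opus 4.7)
The plan is to adapt the argument of Theorem~\ref{Thm:Sutter} to the anti-degradable side. By hypothesis there exists a channel $\Theta\colon E\to B$ with $\|\cN-\Theta\circ\cN^c\|_\diamond\le\epsilon$, which means the true $B$-output of $\cN$ can be approximated (uniformly over joint inputs) by processing the $E$-output of $\cN^c$ through $\Theta$. The strategy is: fix an optimizer and form its true output $\sigma$ together with the approximating output $\tau$ obtained by appending $\Theta$ to $\cN^c$; apply data processing across the $E\to B$ channel to conclude $I(R\!:\!B)_\tau\le I(R\!:\!E)_\sigma$ (where $R\in\{A,U\}$ is the reference); bound $I(R\!:\!B)_\sigma-I(R\!:\!B)_\tau$ by the AFW continuity estimates listed in the excerpt; and combine.

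For the first inequality, fix any state $\rho_{AA'}$, let $\sigma_{ABE}$ be the joint output under a Stinespring dilation of $\cN$, and set $\tau_{AB}=(\id_A\otimes\Theta)(\sigma_{AE})$. The diamond-norm hypothesis gives $\|\sigma_{AB}-\tau_{AB}\|_1\le\epsilon$, and data processing yields $I(A\!:\!B)_\tau\le I(A\!:\!E)_\sigma$. Decomposing $I(A\!:\!B)=H(B)-H(B|A)$, so that all continuity bounds depend on $|B|$ rather than the possibly larger $|A|$, and using $\tfrac12\|\sigma_{AB}-\tau_{AB}\|_1\le\epsilon/2$ in the quoted Audenaert and AFW bounds gives
\begin{align*}
I(A\!:\!B)_\sigma-I(A\!:\!B)_\tau\le|H(B)_\sigma-H(B)_\tau|+|H(B|A)_\sigma-H(B|A)_\tau|\le f_1(|B|,\epsilon)+f_2(|B|,\epsilon).
\end{align*}
Adding the data-processing inequality and taking the supremum over $\rho_{AA'}$ yields the claimed bound on $P_E(\cN)$.

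For the second inequality, repeat the argument with a classical-quantum input $\rho_{UA}=\sum_u p(u)|u\rangle\!\langle u|\otimes\rho_A^u$, defining $\sigma_{UBE}$ and $\tau_{UB}$ analogously, with $\|\sigma_{UB}-\tau_{UB}\|_1\le\epsilon$ and $I(U\!:\!B)_\tau\le I(U\!:\!E)_\sigma$ by data processing. The crucial difference is that the conditioning system $U$ is now classical, so the conditional-entropy continuity sharpens from $f_2$ to $f_1$ via the cq bound of \cite{wilde2020optimal} listed in the excerpt. This gives $|I(U\!:\!B)_\sigma-I(U\!:\!B)_\tau|\le 2f_1(|B|,\epsilon)$ and hence $P^{(1)}(\cN)\le 2f_1(|B|,\epsilon)$; the standard inequality $Q^{(1)}(\cN)\le P^{(1)}(\cN)$ completes the chain.

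The argument is essentially a routine transposition of the Sutter et al.\ proofs to the complementary channel, and there is no real technical obstacle. The two small points requiring attention are (i) choosing the $H(B)-H(B|A)$ (rather than $H(A)-H(A|B)$) split of the mutual information so the continuity bounds depend on $|B|$, and (ii) invoking the classical-register improvement to obtain the sharper $2f_1$ (rather than $f_1+f_2$) in the $P^{(1)}$ case.
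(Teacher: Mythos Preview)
Your argument is correct and is precisely the one the paper has in mind: the paper's own proof consists of the single sentence ``This follows directly from data-processing and the continuity bounds mentioned above,'' and you have simply spelled out those two ingredients---the DPI step $I(R\!:\!B)_\tau\le I(R\!:\!E)_\sigma$ and the Audenaert/AFW/Wilde continuity bounds applied to $H(B)-H(B|R)$---with the appropriate care about which continuity bound applies when $R$ is classical.
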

\begin{proof}
This follows directly from data-processing and the continuity bounds mentioned above. 
\end{proof}
Combining this with our new capacity bounds, we obtain the following result. 
\begin{corollary}
If $\cN$ is $\epsilon$-degradable, then
\begin{alignat}{2}
Q^{(1)}(\cN) &\leq Q(\cN) &&\leq Q^{(1)}(\cN) + f_1(|E|, \epsilon) + f_2(|E|, \epsilon), \\
P^{(1)}(\cN) &\leq P(\cN) &&\leq P^{(1)}(\cN) + 2 f_1(|E|, \epsilon) + 2 f_2(|E|, \epsilon), \\
Q^{(1)}(\cN) &\leq P^{(1)}(\cN) &&\leq Q^{(1)}(\cN) + 2f_1(|E|, \epsilon), \\
Q(\cN) &\leq P(\cN) &&\leq Q(\cN) + f_1(|E|, \epsilon) + f_2(|E|, \epsilon). 
\end{alignat}
\end{corollary}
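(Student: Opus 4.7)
The plan is to combine the capacity bounds from Corollary~\ref{Cor:CapBounds} with the anti-degradability bounds already assembled earlier in this subsection (Sutter's anti-degradable bound and the preceding lemma). The key observation to get started is that $\epsilon$-degradability of $\cN$ translates directly into $\epsilon$-anti-degradability of $\cN^c$. Indeed, if $\cD$ witnesses $\|\cN^c - \cD\circ \cN\|_\diamond \leq \epsilon$, and we fix compatible Stinespring isometries so that $(\cN^c)^c \cong \cN$, then the same $\cD$ satisfies $\|\cN^c - \cD\circ (\cN^c)^c\|_\diamond \leq \epsilon$, which is exactly the definition of $\cN^c$ being $\epsilon$-anti-degradable.

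With this in hand, I would apply the anti-degradability bounds to $\cN^c$, noting that the output dimension of $\cN^c$ equals $|E|$. The Sutter anti-degradable bound then yields $Q(\cN^c) \leq P(\cN^c) \leq f_1(|E|,\epsilon) + f_2(|E|,\epsilon)$, while the preceding lemma contributes $P_E(\cN^c) \leq f_1(|E|,\epsilon) + f_2(|E|,\epsilon)$ together with $Q^{(1)}(\cN^c) \leq P^{(1)}(\cN^c) \leq 2 f_1(|E|,\epsilon)$. At this point all quantities appearing on the right-hand sides of Cor.~\ref{Cor:CapBounds} involving the complementary channel have been controlled by $\epsilon$-dependent continuity terms.

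The four inequalities of the corollary are then obtained by substitution into the four bounds of Cor.~\ref{Cor:CapBounds}: using $Q(\cN) \leq Q^{(1)}(\cN) + P_E(\cN^c)$ gives the first line; $P(\cN) \leq P^{(1)}(\cN) + Q(\cN^c) + P_E(\cN^c)$ together with the two $f_1+f_2$ bounds gives the factor of $2$ in the second line; $P^{(1)}(\cN) \leq Q^{(1)}(\cN) + Q^{(1)}(\cN^c)$ combined with the lemma's $2 f_1$ estimate gives the third line; and $P(\cN) \leq Q(\cN) + Q(\cN^c)$ directly gives the fourth. This proof is essentially mechanical, and I do not anticipate a real obstacle: the only conceptual move is noting that degradability of $\cN$ is anti-degradability of $\cN^c$. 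The substantive content is really just observing that the new bounds recover and, using $f_1 \leq f_2$, improve upon the corresponding Sutter bounds from Thm.~\ref{Thm:Sutter}.
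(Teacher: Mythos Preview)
Your proposal is correct and follows essentially the same route as the paper: the paper simply states that the corollary is obtained by ``combining this with our new capacity bounds,'' where ``this'' refers to the anti-degradability lemma and Sutter's bound applied to $\cN^c$, which is exactly the substitution into Cor.~\ref{Cor:CapBounds} you describe. The only step you make explicit that the paper leaves implicit is the identification of $\epsilon$-degradability of $\cN$ with $\epsilon$-anti-degradability of $\cN^c$, and your line-by-line matching of the four bounds is accurate.
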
 
As the bounds in Theorem~\ref{Thm:Sutter} are also primarily based on continuity bounds, our improvements are mostly due to the different proof technique, the fact that our bounds allow the use of the improved continuity bound from~\cite{wilde2020optimal}, and a clean regularization for the final inequality. 
The approximate degradability bounds above follow directly from further manipulating our operational capacity bounds, and hence our bounds must be at least as good as those in~\cite{sutter2017approximate}. 

\subsection{Examples}\label{sec:examples}

The bounds in this work provide new operational insights into how different capacities interact with each other. By themselves, our bounds are competitive to the best bounds in the literature; in particular, they are at least as good as those obtained from approximate degradability~\cite{sutter2017approximate}. Furthermore, our bounds can easily be combined with other capacity bounds to also give numerical bounds on the discussed capacities. 
Notably, only few bounds for the private capacity and the one-way distillable key are known (see \cite{leditzky2022platypus} for a recently defined efficiently computable bound on $P(\cdot)$ based on a result of \cite{fang2021geometric}).
Using available quantum capacity bounds, our results yield bounds on these private capacity quantities as an application.
To illustrate this technique, we consider the Horodecki channel $\cN_H$~\cite{HorodeckiHorodeckiEA05,horodecki2008low,HorodeckiHorodeckiEA09}, a so-called entanglement-binding channel \cite{HorodeckiHorodeckiEA00} satisfying $0 = Q(\cN_H) < P(\cN_H)$.
To obtain a bound on $P(\cN_H)$, we may combine our bound in \eqref{eq:QP-cap-bounds} with any available quantum capacity bound.
Using the SDP-computable bound from~\cite{wang2018semidefinite} gives
\begin{align}
	P(\cn_H) \leq Q(\cn_H) + Q(\cn_H^c) \leq 0.7284. 
\end{align}

Evidently, the bound $P(\cN)\leq Q(\cN)+Q(\cN^c)$ in \eqref{eq:QP-cap-bounds} is particularly strong when both $\cN$ and $\cN^c$ have small quantum capacity.
This holds for example when both $\cN$ and $\cN^c$ are approximately PPT channels, i.e., all of their output states are close (e.g., in trace distance) to a PPT state.
Since a channel is PPT if and only if its Choi operator is PPT, we thus look for channels $\cN$ such that the Choi operators of $\cN$ and $\cN^c$ are both close to PPT states.
We found numerical examples of such channels in low dimensions, e.g., $\dim\cH_A = 3 = \dim\cH_B$ and $\dim\cH_E=4$.
The SDP upper bound on $Q(\cdot)$ from \cite{wang2018semidefinite} detects (approximately) PPT channels, and thus yields small values for these channels.
For example, we found channels $\cN$ for which both $Q(\cN)$ and $Q(\cN^c)$ are $\approx 0.02$, implying small private capacity via our bound \eqref{eq:QP-cap-bounds}.
Interestingly, these channels still have strictly positive quantum and private capacity, as we were also able to certify strictly positive coherent information (typically $\approx 10^{-4}$) for them.

It is also interesting to consider the exact case of the above examples, i.e., quantum channels $\cN$ for which both $\cN$ and $\cN^c$ are PPT.
For such `bi-PPT' channels, our bound \eqref{eq:QP-cap-bounds} implies $P(\cN)=0$, and hence these channels would comprise a new class of zero-private-capacity channels, provided they are not also antidegradable.
Having two different classes of zero-private-capacity channels may lead to an observation of superactivation of the private capacity \cite{smith2012detecting}, which has not been found so far.
Our approximate examples from above may hint at the existence of such channels.

\section{Partial orders on quantum states}\label{Sec:States}

Recently the concept of (approximate) degradability was transferred to quantum states in~\cite{leditzky2017useful}. 
We consider a bipartite quantum state $\rho_{AB}$ with purification $\Phi_{ABE}$. The state $\rho_{AB}$ is called degradable if there exists a channel $\cD_{B\rightarrow E}$ such that 
\begin{align}
\rho_{AE} = \cD_{B\rightarrow E}(\rho_{AB}), 
\end{align}
where $\rho_{AE}=\tr_B \Phi_{ABE}$. From now on we will sometimes use the notation $\rho_{AE} =: \rho^c_{AB}$ to emphasize the role of $\rho_{AE}$ as the complementary state, in analogy to the complementary channel of a quantum channel.

It seems natural now to define new partial orders on states motivated by operational quantities. We pick the one-way distillable entanglement and secret key as our quantities of choice. Devetak and Winter showed~\cite{devetak2005distillation} that the one-way distillable secret key is given by 
\begin{align}
K_{\rightarrow}(\rho_{ABE}) = \lim_{n\rightarrow\infty}\frac1n K^{(1)}_{\rightarrow}(\rho_{ABE}^{\otimes n})
\end{align}
with
\begin{align}
K^{(1)}_{\rightarrow}(\rho_{ABE}) = \max_{Q, T|X} I(X:B|T) - I(X:E|T)
\end{align}
evaluated on
\begin{align}
\omega_{TXBE} = \sum_{t,x} R(t|x) P(x) |t\rangle\langle t|_T \otimes |x\rangle\langle x|_X \otimes \tr_A(\rho_{ABE}(Q_x\otimes\11_{BE})).
\end{align}
Here, $\lbrace Q_x\rbrace_x$ is a positive operator-valued measure (POVM), that is, $Q_x\geq 0$ for all $x$ and $\sum_x Q_x = \11_A$.
The one-way distillable entanglement is given by
\begin{align}
D_{\rightarrow}(\rho_{AB}) = \lim_{n\rightarrow\infty}\frac1n D^{(1)}_{\rightarrow}(\rho_{AB}^{\otimes n})
\end{align}
with 
\begin{align}
D^{(1)}_{\rightarrow}(\rho_{AB}) = \max_\cT I(A' \rangle BX), 
\end{align}
evaluated on $\cT_{A\rightarrow A'X}(\rho_{AB})$ where $\cT_{A\rightarrow A'X}$ is a quantum instrument. 

Generally, a quantum instrument is a map $\cT_{A\rightarrow A'X}(\cdot ) = \sum_x T_x(\cdot) \otimes |x\rl x|_X$ with each map $T_x$ being CP and such that $\sum_x T_x$ is also TP. It was shown in~\cite{devetak2005distillation} that, when considering the one-way distillable entanglement, it is sufficient to optimize over instruments where each $T_x$ is described by only one Kraus operator, i.e., $T_x(\cdot)=K_x \cdot K_x^\dagger$. Additionally, they showed that one can further restrict to the case where $K_x\geq 0$. With these observations it follows that every considered instrument is equivalently described by a POVM $\{ K_x^2 \}_x$. For the remainder of this work all instruments will be of this restricted form and this will allow us to discuss secret key and entanglement on equal footing. 

Next, for the purpose of this work we shall specify a setting that brings both quantities defined above closer together. We want to consider a state $\rho_{AB}$ with purification $\Phi_{ABE}$. When distilling secret key we give the full environment system to the eavesdropper and define $K_{\rightarrow}(\rho_{AB}) := K_{\rightarrow}(\Phi_{ABE})$. Instead of the measurement $Q$ we can optimize over an instrument, of the form as just discussed, and discard the output quantum state. We can further generalize by considering an isometric extension of the instrument $V \coloneqq V_{A\rightarrow A'X\bar X}$, as e.g.~done in~\cite{leditzky2017useful}. Taking all this together, we define the following pure quantum state:
\begin{align}
\Psi_{X\bar X A'BE} = V \Phi_{ABE} V^\dagger = \sum_{x,y}  \sqrt{P(x)P(y)} |x\rangle\langle y|_X \otimes |x\rangle\langle y|_{\bar X} \otimes K_x \Phi_{ABE} K_y^\dagger, 
\end{align}
for which
\begin{align}
\tr_{\bar X} \Psi_{X\bar X A'BE}  = \cT_{A\rightarrow A'X}(\Phi_{ABE}) =  \sum_{x}  P(x) |x\rl x|_X \otimes K_x\Phi_{ABE}K_x^\dagger,
\end{align}
is exactly the state we optimize over in the distillable entanglement. 
Now, applying a classical channel $R\colon X\to T$ to the system $X$, we get the following state,
\begin{align}
\omega_{TXA'BE} = \sum_{t,x} R(t|x) P(x) |t\rangle\langle t|_T \otimes |x\rangle\langle x|_X \otimes T_x(\Phi_{ABE}), 
\end{align}
which is the state we optimize over when considering the distillable secret key. 
It follows that we can evaluate both $K^{(1)}_{\rightarrow}(\rho_{AB})$ and $D^{(1)}_{\rightarrow}(\rho_{AB})$ on essentially the same state. 

We now define the partial orders discussed in the sequel.
\begin{definition}\label{def:state-partial-orders}
A quantum state $\rho_{AB}$ is called:
\begin{itemize}
	\item \textit{$\epsilon$-regularized more secret} if $K_{\rightarrow}(\rho_{AB}^c) \leq \epsilon$;
	\item \textit{$\epsilon$-more secret} if $K^{(1)}_{\rightarrow}(\rho_{AB}^c) \leq \epsilon$;
	\item \textit{$\epsilon$-regularized more informative} if $D_{\rightarrow}(\rho_{AB}^c) \leq \epsilon$;
	\item \textit{$\epsilon$-more informative} if $D^{(1)}_{\rightarrow}(\rho_{AB}^c) \leq \epsilon$.
\end{itemize}
For $\epsilon=0$ we drop the $\epsilon$ in the name in each case. We define the corresponding anti-orders by exchanging $\rho_{AB}^c$ with $\rho_{AB}$. 
\end{definition} 

It is clear from \Cref{def:state-partial-orders} that e.g.~$\epsilon$-anti regularized more secret implies small distillable secret key, i.e., $K_{\rightarrow}(\rho_{AB})\leq\epsilon$, and similar for the others. 

Our next goal is to rephrase the partial orders in terms of entropic inequalities. $K^{(1)}_{\rightarrow}$ already has a convenient form for that, but it will be useful to find an alternate expression for $D^{(1)}_{\rightarrow}$. Note that we can evaluate $D^{(1)}_{\rightarrow}$ on the pure state $\Psi_{X\bar X A'BE}$ defined above. We then get
\begin{align}
I(A'\rangle BX) &= H(BX) - H(A'BX) \\
&= H(BX) - H(\bar X E)  \\
&= H(BX) - H(X E)  \\
&= H(B | X) - H(E | X) , 
\end{align}
allowing us to write 
\begin{align}
D^{(1)}_{\rightarrow}(\rho_{AB}) = \max_\cT H(B | X) - H(E | X). 
\end{align}
We are now ready to give the following equivalences. 
\begin{lemma}
The state $\rho_{AB}$ is:
\begin{itemize}
	\item \textit{$\epsilon$-regularized more secret} iff for all $n\geq 1$, classical channels $R$ and quantum instruments $\cT$ applied to $\rho_{AB}^{\otimes n}$, we have 
\begin{align}
I(X:E^n|T) \leq I(X:B^n|T)  + n\epsilon;
\end{align}
\item \textit{$\epsilon$-more secret} iff for all $R$, $\cT$ applied to $\rho_{AB}$, we have 
\begin{align}
I(X:E|T) \leq I(X:B|T)  + \epsilon;
\end{align}
\item \textit{$\epsilon$-regularized more informative} iff for all $n\geq 1$ and $\cT$  applied to $\rho_{AB}^{\otimes n}$, we have 
\begin{align}
H(E^n | X) \leq H(B^n | X)  + n\epsilon;
\end{align}
\item \textit{$\epsilon$-more informative} iff for all $\cT$ applied to $\rho_{AB}$, we have 
\begin{align}
H(E | X) \leq H(B | X)   + \epsilon.
\end{align}
\end{itemize}
\end{lemma}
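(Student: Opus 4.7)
My plan is to unpack each equivalence directly from the definitions of the partial orders together with the capacity formulas recorded just above. The key observation is that $\rho_{AB}^c = \rho_{AE}$ is itself a bipartite state on which we are evaluating $K^{(1)}_{\rightarrow}$ or $D^{(1)}_{\rightarrow}$; this means we are running the same formulas but with $E$ playing the role of Bob and $B$ playing the role of the purifying environment. Concretely, by the definition of $K^{(1)}_{\rightarrow}$ we have
\begin{align*}
K^{(1)}_{\rightarrow}(\rho_{AB}^c) = \max_{Q,\, T|X} \bigl\{ I(X:E|T) - I(X:B|T) \bigr\}
\end{align*}
evaluated on the same classical-quantum state $\omega_{TXA'BE}$ constructed above (which explains why the instrument and classical channel $R$ appear on the right-hand side of the lemma). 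Hence $K^{(1)}_{\rightarrow}(\rho_{AB}^c) \leq \epsilon$ is literally the entropic inequality in the \emph{$\epsilon$-more secret} bullet. For the \emph{$\epsilon$-more informative} bullet I would apply the alternate formula $D^{(1)}_{\rightarrow}(\rho_{AB}) = \max_\cT [H(B|X) - H(E|X)]$ derived in the paragraph immediately preceding the lemma, with $B$ and $E$ swapped, giving $D^{(1)}_{\rightarrow}(\rho_{AB}^c) \leq \epsilon \iff H(E|X) - H(B|X) \leq \epsilon$ for all instruments $\cT$.

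For the regularized bullets, the first step is to observe that $(\rho_{AB}^{\otimes n})^c = (\rho_{AB}^c)^{\otimes n}$, since a tensor product of purifications purifies the tensor product state. Applying the single-copy characterizations already proved to $\rho_{AB}^{\otimes n}$ shows that the $n$-letter entropic inequality in the lemma is equivalent to $K^{(1)}_{\rightarrow}((\rho_{AB}^c)^{\otimes n}) \leq n\epsilon$ (respectively $D^{(1)}_{\rightarrow}((\rho_{AB}^c)^{\otimes n}) \leq n\epsilon$). To finish, I have to argue that $K_{\rightarrow}(\rho_{AB}^c) \leq \epsilon$ is equivalent to this bound holding for \emph{every} $n$, and similarly for $D_{\rightarrow}$.

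This last equivalence is the main technical point and follows from superadditivity of both $K^{(1)}_{\rightarrow}$ and $D^{(1)}_{\rightarrow}$ under tensor products of states: given admissible instruments and classical post-processings on each factor, their tensor product (resp.\ parallel concatenation) is again admissible on $\sigma_1 \otimes \sigma_2$, so $K^{(1)}_{\rightarrow}(\sigma_1 \otimes \sigma_2) \geq K^{(1)}_{\rightarrow}(\sigma_1) + K^{(1)}_{\rightarrow}(\sigma_2)$, and analogously for $D^{(1)}_{\rightarrow}$. Fekete's lemma for superadditive sequences then gives
\begin{align*}
K_{\rightarrow}(\sigma) = \lim_{n \to \infty} \tfrac{1}{n} K^{(1)}_{\rightarrow}(\sigma^{\otimes n}) = \sup_{n \geq 1} \tfrac{1}{n} K^{(1)}_{\rightarrow}(\sigma^{\otimes n}),
\end{align*}
from which the equivalence $K_{\rightarrow}(\sigma) \leq \epsilon \iff K^{(1)}_{\rightarrow}(\sigma^{\otimes n}) \leq n\epsilon \text{ for all } n$ is immediate; the same holds for $D_{\rightarrow}$. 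The only point requiring slight care is to make sure the instruments on $n$ copies in the regularized statement are allowed to be arbitrary (not just product), but this is already the correct object since the entropic inequality quantifies over all instruments applied to $\rho_{AB}^{\otimes n}$, matching the formula for $K^{(1)}_{\rightarrow}$ and $D^{(1)}_{\rightarrow}$ on the $n$-fold product state.
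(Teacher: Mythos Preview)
Your proposal is correct and matches the paper's approach: the paper's own proof is simply ``Follows from the above considerations,'' relying on exactly the swap of $B$ and $E$ in the formulas for $K^{(1)}_{\rightarrow}$ and the rewritten $D^{(1)}_{\rightarrow}$ that you spell out. Your explicit use of superadditivity and Fekete's lemma to identify the regularized limit with the supremum over $n$ is a useful detail that the paper leaves implicit (just as it does for the analogous channel orders in Table~\ref{tab:orders}), but it is the same underlying argument rather than a different route.
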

\begin{proof}
Follows from the above considerations.
\end{proof}

Although not immediately obvious from the entropic formulation, we have $D^{(1)}_{\rightarrow}(\rho_{AB}) \leq K^{(1)}_{\rightarrow}(\rho_{AB})$. 
Hence, more secret implies more informative and the same holds for the corresponding regularizations. 
Note that $\epsilon$-regularized more secret also implies the weaker condition,
\begin{align}
I(X:E^n) \leq I(X:B^n)  + n\epsilon, \label{Eq:weakerCond}
\end{align}
to hold for every instrument $\cT$ and all $n\geq 1$. This follows simply by considering the special case where the classical map $\cR$ is trivial. 

We now come to the first application. 
\begin{theorem}\label{Thm:distEntReg}
If the state $\rho_{AB}$ is $\epsilon$-regularized more secret, then we have
\begin{align}
D^{(1)}_{\rightarrow}(\rho_{AB}) \leq D_{\rightarrow}(\rho_{AB}) \leq D^{(1)}_{\rightarrow}(\rho_{AB}) + \epsilon, \label{Eq:D1bound}
\end{align}
and therefore for every state $\rho_{AB}$, 
\begin{align}
D^{(1)}_{\rightarrow}(\rho_{AB}) \leq D_{\rightarrow}(\rho_{AB}) \leq D^{(1)}_{\rightarrow}(\rho_{AB}) + K_{\rightarrow}(\rho_{AB}^c)
\end{align}
\end{theorem}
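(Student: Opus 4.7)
The lower bound $D^{(1)}_{\rightarrow}(\rho_{AB}) \leq D_{\rightarrow}(\rho_{AB})$ is immediate from the definition of the regularization, so the work lies in the upper bound. The strategy is to exploit the fact that any instrument $\cT$ on Alice's system $A^n$ preserves the marginal $\sigma_{B^n E^n} = \rho_{BE}^{\otimes n}$ of the post-instrument state, which will allow me to split $D^{(1)}_{\rightarrow}(\rho_{AB}^{\otimes n})$ into a ``hashing'' contribution $n\, I(A\rangle B)_{\rho}$ plus a reversed classical-quantum mutual information remainder of the form controlled by $K^{(1)}_{\rightarrow}((\rho^c_{AB})^{\otimes n})$.

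Concretely, let $\cT$ be a Kraus-rank-$1$ instrument on $A^n$ with positive Kraus operators $\lbrace K_x\rbrace$, and let $\sigma_{XA'B^nE^n}$ be the state obtained by applying $\cT$ to the purification $\Phi_{ABE}^{\otimes n}$. From the derivation in the paragraph preceding \Cref{Thm:distEntReg}, the coherent information entering $D^{(1)}_{\rightarrow}$ rewrites as $I(A'\rangle B^n X)_\sigma = H(B^n|X)_\sigma - H(E^n|X)_\sigma$. Using $H(R|X) = H(R) - I(X:R)$ for classical $X$, together with $\sigma_{B^n E^n} = \rho_{BE}^{\otimes n}$ and the identity $H(B)_\rho - H(E)_\rho = I(A\rangle B)_\rho$ (from purity of $\Phi_{ABE}$), one obtains
\begin{align*}
H(B^n|X) - H(E^n|X) = n\, I(A\rangle B)_{\rho} + I(X:E^n) - I(X:B^n).
\end{align*}
Maximizing over $\cT$ yields the identity
\begin{align*}
D^{(1)}_{\rightarrow}(\rho_{AB}^{\otimes n}) = n\, I(A\rangle B)_{\rho} + \max_{\cT} \bigl[ I(X:E^n) - I(X:B^n) \bigr].
\end{align*}

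The reduced state on $X B^n E^n$ induced by the instrument $\lbrace K_x\rbrace$ depends only on the POVM elements $Q_x = K_x^2$, so the remainder maximum equals the corresponding maximum over POVMs on $A^n$; dropping the classical post-processing $R$ in the definition of $K^{(1)}_{\rightarrow}((\rho^c_{AB})^{\otimes n})$ then gives
\begin{align*}
\max_{\cT} \bigl[ I(X:E^n) - I(X:B^n) \bigr] \leq K^{(1)}_{\rightarrow}((\rho^c_{AB})^{\otimes n}).
\end{align*}
Dividing the identity by $n$, taking $n\to\infty$, and using the trivial-instrument bound $I(A\rangle B)_\rho \leq D^{(1)}_{\rightarrow}(\rho_{AB})$ yields
\begin{align*}
D_{\rightarrow}(\rho_{AB}) \leq I(A\rangle B)_\rho + K_{\rightarrow}(\rho^c_{AB}) \leq D^{(1)}_{\rightarrow}(\rho_{AB}) + K_{\rightarrow}(\rho^c_{AB}) \leq D^{(1)}_{\rightarrow}(\rho_{AB}) + \epsilon
\end{align*}
under the $\epsilon$-regularized more secret hypothesis. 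The second statement of the theorem is then the case $\epsilon = K_{\rightarrow}(\rho^c_{AB})$, which every state satisfies by definition.

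I do not expect a serious technical obstacle; the whole argument hinges on the marginal-preservation identity above. The one small subtlety is matching the optimization domains of $D^{(1)}_{\rightarrow}$ and $K^{(1)}_{\rightarrow}$, but this is resolved by noting that Kraus-rank-$1$ instruments exhaust all POVMs for the purpose of computing the reduced classical-quantum state on $B^n E^n$, and that dropping the extra classical channel $R$ only loosens the bound in the correct direction.
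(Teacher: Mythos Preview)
Your proposal is correct and follows essentially the same route as the paper's proof: both rewrite $I(A'\rangle B^nX)=H(B^n|X)-H(E^n|X)$, use that the instrument on $A^n$ leaves $\sigma_{B^nE^n}=\rho_{BE}^{\otimes n}$ unchanged to extract the product term $n\,I(A\rangle B)_\rho$, and control the remainder $I(X:E^n)-I(X:B^n)$ via the (weaker form of the) $\epsilon$-regularized more secret condition before regularizing. The only cosmetic difference is that the paper first proves the $\epsilon$-version and then specializes to $\epsilon=K_\rightarrow(\rho_{AB}^c)$, whereas you bound the remainder by $K^{(1)}_\rightarrow((\rho_{AB}^c)^{\otimes n})$ directly and then deduce the $\epsilon$-statement; the underlying computation is identical.
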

\begin{proof}
We start by proving the first claim. Note that $D^{(1)}_{\rightarrow}(\rho_{AB}) \leq D_{\rightarrow}(\rho_{AB})$ holds by definition. Next, we show that  $D^{(1)}_{\rightarrow}(\rho_{AB}) $ is approximately additive if $\rho_{AB}$ is $\epsilon$-regularized more secret:
\begin{align}
D^{(1)}_{\rightarrow}(\rho_{AB}) &= \max_\cT H(B | X)_\Psi - H(E | X)_\Psi \\
&\leq  H(B)_\Psi - H(E)_\Psi + \epsilon \\
&= H(B)_\Phi - H(E)_\Phi + \epsilon \\
&= H(B)_\Phi - H(AB)_\Phi + \epsilon \\
&= I(A\rangle B)_\Phi + \epsilon,
\end{align}
where the inequality follows from \eqref{Eq:weakerCond}, and the second equality is because $\Psi_{BE}=\Phi_{BE}$. The other steps are straightforward. Applying the same steps to $\rho_{AB}^{\otimes n}$, we get
\begin{align}
D^{(1)}_{\rightarrow}(\rho_{AB}^{\otimes n}) = I(A^n\rangle B^n)_{\Phi^{\otimes n}} + n\epsilon = n I(A\rangle B)_{\Phi} + n\epsilon , 
\end{align}
because the coherent information is additive on product states. Regularizing finishes the proof of the first statement. 
By definition of the order, every state is $\epsilon$-regularized more secret with $\epsilon=K_{\rightarrow}(\rho_{AB}^c)$, which proves the second claim.
\end{proof}

Note that Equation~\eqref{Eq:D1bound} is similar to one of the main results in~\cite{leditzky2017useful}, but with a weaker requirement on the state. We also remark that the proof of \Cref{Thm:distEntReg} does not use the full power of the more secret ordering, but only the weaker condition in Equation~\ref{Eq:weakerCond}. Hence, this condition could itself be seen as a partial order on quantum states. 

Our next results bounds the distillable secret key in terms of the distillable entanglement:
\begin{theorem}\label{Thm:distEntKey}
If the state $\rho_{AB}$ is $\epsilon$-more informative, then we have that 
\begin{align}
D^{(1)}_{\rightarrow}(\rho_{AB}) \leq K^{(1)}_{\rightarrow}(\rho_{AB}) \leq D^{(1)}_{\rightarrow}(\rho_{AB}) + \epsilon, \label{Eq:K1bound}
\end{align}
If the state $\rho_{AB}$ is $\epsilon$-regularized more informative, then we have that 
\begin{align}
D_{\rightarrow}(\rho_{AB}) \leq K_{\rightarrow}(\rho_{AB}) \leq D_{\rightarrow}(\rho_{AB}) + \epsilon. \label{Eq:K2bound}
\end{align}
Therefore, for every state $\rho_{AB}$, 
\begin{align}
D^{(1)}_{\rightarrow}(\rho_{AB}) &\leq K^{(1)}_{\rightarrow}(\rho_{AB}) \leq D^{(1)}_{\rightarrow}(\rho_{AB}) + D^{(1)}_{\rightarrow}(\rho_{AB}^c)  \\
D_{\rightarrow}(\rho_{AB}) &\leq K_{\rightarrow}(\rho_{AB}) \leq D_{\rightarrow}(\rho_{AB}) + D_{\rightarrow}(\rho_{AB}^c) 
\end{align}
\end{theorem}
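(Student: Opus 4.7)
The lower bounds $D^{(1)}_{\rightarrow}(\rho_{AB}) \le K^{(1)}_{\rightarrow}(\rho_{AB})$ and $D_{\rightarrow}(\rho_{AB}) \le K_{\rightarrow}(\rho_{AB})$ are already recorded in the text preceding the theorem, so I focus on the upper bounds. My plan for \eqref{Eq:K1bound} is to fix an optimal pair consisting of a POVM $\{K_x^2\}$ (with $K_x \ge 0$) and classical channel $R(t|x)$ achieving $K^{(1)}_\rightarrow(\rho_{AB})$, and to exploit the algebraic identity
\begin{align*}
I(X:B|T) - I(X:E|T) \;=\; \bigl[H(B|T) - H(E|T)\bigr] + \bigl[H(E|XT) - H(B|XT)\bigr],
\end{align*}
which splits the task into bounding the first bracket by $D^{(1)}_\rightarrow(\rho_{AB})$ and the second bracket by $\epsilon$.

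For the second bracket I would invoke the restricted instrument $\tilde\cT$ on $\rho_{AB}$ with outcomes $(t,x)$ and Kraus operators $\tilde K_{t,x} := \sqrt{R(t|x)}\, K_x$: these are Hermitian positive semidefinite and satisfy $\sum_{t,x} \tilde K_{t,x}^2 = \sum_x K_x^2 = \id$, so $\tilde\cT$ is a valid restricted instrument. The entropic characterization of $\epsilon$-more informative applied to $\tilde\cT$ (with the joint outcome playing the role of $X$) immediately delivers $H(E|TX) - H(B|TX) \le \epsilon$.

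The main hurdle is to establish $H(B|T) - H(E|T) \le D^{(1)}_\rightarrow(\rho_{AB})$ for the first bracket. For this I plan to coarse-grain over $x$ and consider the restricted instrument $\hat\cT$ on $\rho_{AB}$ with a single Kraus operator $\hat K_t := \sqrt{M_t}$ per outcome, where $M_t := \sum_x R(t|x) K_x^2$. Since $\sum_t M_t = \sum_x K_x^2 = \id$, the map $\hat\cT$ is again a valid restricted instrument. A short computation using the partial-trace identity $\tr_A((K\otimes\id)\Phi(K\otimes\id)) = \tr_A((K^2\otimes\id)\Phi)$ for Hermitian $K$ should confirm that the reduced state on $BE$ of $\hat\cT(\Phi_{ABE})$, conditional on outcome $t$, coincides with the state $\omega^t_{BE}$ produced by the original $(\cT,R)$. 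Consequently $H(B|T)_\omega - H(E|T)_\omega$ is a value attained in the supremum defining $D^{(1)}_\rightarrow(\rho_{AB})$, completing the proof of \eqref{Eq:K1bound}.

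The regularized statement \eqref{Eq:K2bound} should then follow by applying the just-proven single-shot bound to $\rho_{AB}^{\otimes n}$, which by the entropic characterization of $\epsilon$-regularized more informative is itself $n\epsilon$-more informative, dividing by $n$, and passing to the limit $n\to\infty$. The remaining two inequalities are immediate since, by definition, every state is $D^{(1)}_\rightarrow(\rho_{AB}^c)$-more informative and $D_\rightarrow(\rho_{AB}^c)$-regularized more informative. The overall argument runs closely parallel to the channel-side proof of \Cref{theoboundscapacities}, with the extraction of the coarse-grained restricted instrument $\hat\cT$ being the ingredient genuinely specific to the state setting.
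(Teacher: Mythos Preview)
Your proposal is correct and follows essentially the same route as the paper. The paper writes the same splitting $I(X:B|T)-I(X:E|T)=[H(E|XT)-H(B|XT)]+[H(B|T)-H(E|T)]$ and then (i) uses the Markov chain $T\leftarrow X$ to replace $H(\cdot|XT)$ by $H(\cdot|X)$ before invoking the $\epsilon$-more informative condition on the original instrument, whereas you build the refined instrument $\tilde\cT$ with outcomes $(t,x)$---both give the same bound; and (ii) says one can ``absorb the channel $X\to T$ into the choice of instrument'' to bound $H(B|T)-H(E|T)$ by $D^{(1)}_{\rightarrow}(\rho_{AB})$, which is precisely your explicit coarse-grained instrument $\hat\cT$ with Kraus operators $\sqrt{M_t}$.
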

\begin{proof}
We fix the measurement and channel achieving the maximum in $K^{(1)}_{\rightarrow}(\rho_{AB})$, and we let $\omega$ be the corresponding output state. Then,
\begin{align}
K^{(1)}_{\rightarrow}(\rho_{AB}) &= I(X:B|T) - I(X:E|T) \\
&= I(XT:B) - I(XT:E) + I(T:E) - I(T:B) \\
&= H(E|XT) - H(B|XT) + H(B|T) - H(E|T) \\
&= H(E|X) - H(B|X) + H(B|T) - H(E|T) \\
&\leq \epsilon + H(B|T) - H(E|T) \\
&\leq \epsilon + D^{(1)}_{\rightarrow}(\rho_{AB}), 
\end{align}
where the first three equalities are by definition of the involved quantities. The final equality is because we have $A\rightarrow X\rightarrow T$ and therefore $T$ does not provide additional information over $X$. The first inequality follows by definition of the more informative partial order. The second inequality follows because the remaining entropies are independent of $X$, and we can absorb the channel $X\rightarrow T$ into the choice of instrument. This proves the first claim. 

The second statement follows in the same way by considering $\rho_{AB}^{\otimes n}$ using the assumption that $\rho_{AB}$ is $\epsilon$-regularized more informative, followed by regularizing the resulting inequality. 

The final claim follows easily from the previous two, noticing that every state fulfills the needed condition for appropriately large $\epsilon$. 
\end{proof}

Finally, we can combine the previous results to get one more corollary. 
\begin{corollary}\label{Cor:DistKeyReg}
If the state $\rho_{AB}$ is $\epsilon$-regularized more secret, then we have that 
\begin{align}
K^{(1)}_{\rightarrow}(\rho_{AB}) \leq K_{\rightarrow}(\rho_{AB}) \leq K^{(1)}_{\rightarrow}(\rho_{AB}) + 2\epsilon, \label{Eq:D1bound-1}
\end{align}
and for every state $\rho_{AB}$, 
\begin{align}
K^{(1)}_{\rightarrow}(\rho_{AB}) \leq K_{\rightarrow}(\rho_{AB}) &\leq K^{(1)}_{\rightarrow}(\rho_{AB}) + K_{\rightarrow}(\rho_{AB}^c) + D_{\rightarrow}(\rho_{AB}^c) \\
&\leq K^{(1)}_{\rightarrow}(\rho_{AB}) + 2K_{\rightarrow}(\rho_{AB}^c).
\end{align}
\end{corollary}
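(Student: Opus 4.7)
The plan is to simply chain together \Cref{Thm:distEntReg} and \Cref{Thm:distEntKey}, together with the already-noted inequality $D^{(1)}_{\rightarrow}(\rho_{AB}) \leq K^{(1)}_{\rightarrow}(\rho_{AB})$ and the implication that $\epsilon$-regularized more secret implies $\epsilon$-regularized more informative. Nothing genuinely new needs to be proved; the work is in bookkeeping the approximation constants.

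First, I would establish the implication between the two partial orders: since $D^{(1)}_{\rightarrow}(\sigma) \leq K^{(1)}_{\rightarrow}(\sigma)$ for any bipartite state $\sigma$, the same inequality survives regularization and gives $D_{\rightarrow}(\sigma) \leq K_{\rightarrow}(\sigma)$. Applying this to $\sigma = \rho_{AB}^c$ shows that $K_{\rightarrow}(\rho_{AB}^c) \leq \epsilon$ implies $D_{\rightarrow}(\rho_{AB}^c) \leq \epsilon$, i.e.\ $\epsilon$-regularized more secret implies $\epsilon$-regularized more informative.

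Next, assume $\rho_{AB}$ is $\epsilon$-regularized more secret. By the just-observed implication and \Cref{Thm:distEntKey}, we have $K_{\rightarrow}(\rho_{AB}) \leq D_{\rightarrow}(\rho_{AB}) + \epsilon$. Applying \Cref{Thm:distEntReg} to bound $D_{\rightarrow}(\rho_{AB})$ by $D^{(1)}_{\rightarrow}(\rho_{AB}) + \epsilon$, and finally using $D^{(1)}_{\rightarrow}(\rho_{AB}) \leq K^{(1)}_{\rightarrow}(\rho_{AB})$, yields
\begin{align}
K_{\rightarrow}(\rho_{AB}) \leq D_{\rightarrow}(\rho_{AB}) + \epsilon \leq D^{(1)}_{\rightarrow}(\rho_{AB}) + 2\epsilon \leq K^{(1)}_{\rightarrow}(\rho_{AB}) + 2\epsilon,
\end{align}
which is \eqref{Eq:D1bound-1}. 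The lower bound $K^{(1)}_{\rightarrow}(\rho_{AB}) \leq K_{\rightarrow}(\rho_{AB})$ is by definition.

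For the unconditional second statement, I would instantiate the approximation constants with the operationally meaningful quantities: every state is $D_{\rightarrow}(\rho_{AB}^c)$-regularized more informative and $K_{\rightarrow}(\rho_{AB}^c)$-regularized more secret. Chaining \Cref{Thm:distEntKey} (with the first constant) and \Cref{Thm:distEntReg} (with the second) gives
\begin{align}
K_{\rightarrow}(\rho_{AB}) \leq D_{\rightarrow}(\rho_{AB}) + D_{\rightarrow}(\rho_{AB}^c) \leq D^{(1)}_{\rightarrow}(\rho_{AB}) + K_{\rightarrow}(\rho_{AB}^c) + D_{\rightarrow}(\rho_{AB}^c),
\end{align}
and then $D^{(1)}_{\rightarrow}(\rho_{AB}) \leq K^{(1)}_{\rightarrow}(\rho_{AB})$ yields the first advertised bound. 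The final inequality uses $D_{\rightarrow}(\rho_{AB}^c) \leq K_{\rightarrow}(\rho_{AB}^c)$ once more. Since the whole argument is just composition of the two preceding theorems, there is no real obstacle; the only thing to be careful about is to apply the right theorem to the right approximation constant (more secret controls the coherent-information regularization gap, more informative controls the key-vs-entanglement gap) so that the constants in the final bound come out as stated.
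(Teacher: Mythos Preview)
Your proposal is correct and matches the paper's approach exactly: the paper's proof is the single sentence ``This follows simply by combining the previous two theorems,'' and you have spelled out precisely that combination, including the implication that $\epsilon$-regularized more secret implies $\epsilon$-regularized more informative (which the paper noted just before \Cref{Thm:distEntReg}). There is nothing to add or correct.
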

\begin{proof}
This follows simply by combining the previous two theorems. 
\end{proof}

\section{Partial orders with symmetric side channel assistance}\label{Sec:sidechannel}

The works~\cite{smith2008quantum, smith2008private} discuss versions of the quantum and private capacities assisted by a symmetric side channel. Since the assistance can only help, they are naturally an upper bound on the respective capacities. The symmetric side channel by itself has zero quantum and private capacity and is both degradable and anti-degradable. The assisted capacities are particularly interesting, since they were proven  in~\cite{smith2008quantum, smith2008private} to be additive, i.e., $Q_{ss}^{(1)}(\cn) = Q_{ss}(\cn)$ and $P_{ss}^{(1)}(\cn) = P_{ss}(\cn)$. One can define \textit{side channel assisted} partial orders based on these quantities, analogous to the ones previously discussed.
Because of the additivity of the side channel assisted quantities, it is not necessary to consider regularizations. Note here that we have
\begin{align}
P_{ss}^{(1)}(\cn) = P_{ss}(\cn) \geq Q_{ss}^{(1)}(\cn) = Q_{ss}(\cn) \geq \frac12 P_{}(\cn), 
\end{align}
where the final inequality was proven in~\cite{smith2008zeroCapacity}. This implies in particular that 
\begin{align}
Q_{ss}^{(1)}(\cn^c) = 0  \quad\Rightarrow\quad  P_{}(\cn^c)=0, 
\end{align}
which could provide us with an easier (non-regularized) condition to determine whether a channel is regularized less noisy. On the other hand, we still lack an example of a channel that is regularized less noisy but not degradable. We know that the quantum capacity is superadditive \cite{smith2008zeroCapacity,li2009private,BrandaoOppenheimEA12,leditzky2022generic} and can in particular be superactivated, i.e., there exists a channel for which $Q(\cN)=0$ but $Q_{ss}(\cn)>0$~\cite{smith2008zeroCapacity}. Note that if we had a channel with $P(\cN^c)=0$, but 
\begin{align}
P_{ss}^{(1)}(\cn^c) > 0  \quad\Rightarrow\quad  \cN \text{ not degradable}, 
\end{align}
it would give us the desired example. It seems intuitive that a similar construction works more generally and there is a deeper connection between superactivation of the private capacity and such examples. We can make this more precise in the following observation.
\begin{corollary}
If the private capacity can be superactivated then degradable channels are a strict subset of regularized less noisy channels, i.e.
\begin{align}
P(\cdot) \text{ can be superactivated} \quad\Rightarrow\quad DEG \subsetneq LN_\infty. 
\end{align} 
\end{corollary}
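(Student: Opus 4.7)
My plan is to convert any superactivating pair for $P(\cdot)$ into an explicit witness in $\mathrm{LN}_\infty\setminus\mathrm{DEG}$. Assume there exist channels $\cN_1,\cN_2$ with $P(\cN_1)=P(\cN_2)=0$ yet $P(\cN_1\otimes\cN_2)>0$. First, for each $i$ I would construct a channel $\cM_i$ having $\cN_i$ as a complement: take any Stinespring isometry realizing $\cN_i$, and define $\cM_i$ to be the map obtained by tracing out the environment of $\cN_i$ instead of its output. Then $\cM_i^c=\cN_i$ information-theoretically, and by the capacity formulation of the regularized less noisy order in Table~\ref{tab:orders}, the vanishing $P(\cM_i^c)=0$ places both $\cM_1$ and $\cM_2$ in $\mathrm{LN}_\infty$.

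Second, I would argue that at least one of $\cM_1,\cM_2$ is not degradable. The key observation is that degradability is closed under tensor products: if $\cD_i\circ\cM_i=\cM_i^c$ for $i=1,2$, then $(\cD_1\otimes\cD_2)\circ(\cM_1\otimes\cM_2)=(\cM_1\otimes\cM_2)^c$, so $\cM_1\otimes\cM_2\in\mathrm{DEG}\subseteq\mathrm{LN}_\infty$. However, the complement of $\cM_1\otimes\cM_2$ may be identified with $\cN_1\otimes\cN_2$ (Stinespring dilations tensor), and the latter has strictly positive private capacity by hypothesis, so $\cM_1\otimes\cM_2\notin\mathrm{LN}_\infty$. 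This contradiction forces at least one $\cM_i$ to lie in $\mathrm{LN}_\infty\setminus\mathrm{DEG}$, yielding the claimed strict inclusion.

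I do not anticipate a substantive obstacle: the entire argument is a short contradiction resting on two structural facts already implicit in Sec.~\ref{sec:definitions}, namely that any channel arises as the complement of some other channel and that complementary channels and degrading maps behave well under tensor products. The only step that deserves a careful sentence in the final write-up is the identification $(\cM_1\otimes\cM_2)^c\cong\cN_1\otimes\cN_2$, which uses that a Stinespring dilation of $\cM_1\otimes\cM_2$ may be chosen as the tensor product of the individual dilations and that the complementary channel is well-defined up to an isometry on the environment.
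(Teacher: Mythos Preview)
Your argument is correct and is essentially the paper's own proof, viewed from the complementary side: the paper works with the superactivating pair $\cN,\cM$ directly and uses that anti-degradability is preserved under tensor products (so at least one of $\cN,\cM$ is not anti-degradable, hence at least one of $\cN^c,\cM^c$ is regularized less noisy but not degradable), whereas you pass to $\cM_i\coloneqq\cN_i^c$ first and phrase the same closure-under-tensor step on the degradable side. The two formulations are equivalent via $\cN\leftrightarrow\cN^c$, and your acknowledged care with $(\cM_1\otimes\cM_2)^c\cong\cN_1\otimes\cN_2$ is exactly the right justification.
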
 
\begin{proof}
Let us assume that the private capacity can be superactivated, meaning there exist channels $\cN$ and $\cM$ such that $P(\cn)=P(\cm)=0$, but $P(\cn\otimes\cm)>0$. We observe that if $\cn$ and $\cm$ are anti-degradable, then so is $\cn\otimes\cm$, and we would have $P(\cn\otimes\cm)=0$. Therefore, by assumption, at least one of $\cn$ or $\cm$ has to be non-anti-degradable. However, again by assumption, their complements $\cn^c$ and $\cm^c$ are regularized less noisy. In summary, at least one of $\cn^c$ or $\cm^c$ is regularized less noisy but not degradable, which concludes the proof. 
\end{proof}
Interestingly, the question whether the private capacity can be superactivated is still open, despite significant effort to find an answer~\cite{li2009private, smith2009extensive, strelchuk2013parrondo}.
Of course, the above corollary implies that if all regularized less noisy channels are also degradable, the private capacity cannot be superactivated. 
Here we mention once more the observation from \Cref{sec:examples}: If there is a non-anti-degradable channel $\cN$ such that both $\cN$ and $\cN^c$ are PPT and hence have zero quantum capacity, then $P(\cN)=0$ by \Cref{Cor:CapBounds}.
Such `bi-PPT' channels $\cN$ would constitute a new class of zero-private-capacity channels, which together with anti-degradable channels may exhibit superactivation of private capacity.

Before ending this section, we briefly observe a result similar to the main theme of this work and state some bounds on the symmetric side channel assisted capacity. 
\begin{corollary}
Let $\cn$ be a quantum channel. We have
\begin{alignat}{3}
Q_{ss}(\cn) &\leq P_{ss}(\cn) &&\leq Q_{ss}(\cn) &&+ Q_{ss}(\cn^c) \label{Eq:PQss}. %
\end{alignat}
\end{corollary}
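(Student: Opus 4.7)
The strategy is to reduce the side-channel-assisted statement to the unassisted one by applying \Cref{Cor:CapBounds} to the composite channel $\cN \otimes \cA_d$, where $\cA_d$ is one of the symmetric side channels used in the definition of $Q_{ss}$ and $P_{ss}$. The first inequality $Q_{ss}(\cN) \leq P_{ss}(\cN)$ is immediate from the fact that every private code yields a quantum code up to the usual relations and carries over under side-channel assistance, so the content of the statement is entirely in the upper bound.

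First, I would record the key structural identity that for any two quantum channels the complement of a tensor product is (up to a Stinespring equivalence) the tensor product of the complements, i.e.\ $(\cN \otimes \cA_d)^c = \cN^c \otimes \cA_d^c$. Then, invoking the bound $P(\cM) \leq Q(\cM) + Q(\cM^c)$ from \Cref{Cor:CapBounds} with $\cM = \cN \otimes \cA_d$ gives
\begin{align}
P(\cN \otimes \cA_d) \leq Q(\cN \otimes \cA_d) + Q(\cN^c \otimes \cA_d^c).
\end{align}

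Second, I would use the defining properties of the symmetric side channel: $\cA_d$ is symmetric and hence self-complementary (up to an isometric isomorphism that preserves all information-theoretic quantities), so $\cA_d^c$ lies in the same family of symmetric side channels used to define $Q_{ss}$. Consequently $Q(\cN^c \otimes \cA_d^c) \leq Q_{ss}(\cN^c)$, while trivially $Q(\cN \otimes \cA_d) \leq Q_{ss}(\cN)$. Taking the supremum over $d$ on the left-hand side, together with additivity $P_{ss}(\cN) = \sup_d P(\cN \otimes \cA_d)$ from \cite{smith2008private}, yields $P_{ss}(\cN) \leq Q_{ss}(\cN) + Q_{ss}(\cN^c)$, as desired.

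The main conceptual point, and the only place where care is needed, is justifying that $\cA_d^c$ behaves as a symmetric side channel in the definition of $Q_{ss}(\cN^c)$ so that the term $Q(\cN^c \otimes \cA_d^c)$ is genuinely controlled by $Q_{ss}(\cN^c)$; this uses the fact that symmetric channels are both degradable and anti-degradable and therefore isomorphic to their complement. Aside from this verification the argument is a short unpacking: apply \Cref{Cor:CapBounds} to the composite channel, identify the complement of the tensor product, and pass to suprema.
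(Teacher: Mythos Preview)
Your proposal is correct and follows essentially the same route as the paper: apply \Cref{Cor:CapBounds} to the composite channel $\cN\otimes\cA_d$, use $(\cN\otimes\cA_d)^c=\cN^c\otimes\cA_d^c=\cN^c\otimes\cA_d$ via the self-complementarity of the symmetric side channel, and take the supremum over $d$. The only cosmetic difference is that the paper works directly with the single-letter identities $Q_{ss}(\cN)=\sup_d Q^{(1)}(\cN\otimes\cA_d)$ and $P_{ss}(\cN)=\sup_d P^{(1)}(\cN\otimes\cA_d)$ and invokes the single-letter bound \eqref{Eq:P1Q1}, whereas you pass through the regularized quantities $P$ and $Q$ and the bound \eqref{Eq:PQ}; the former is marginally cleaner since it sidesteps having to argue that $\sup_d P(\cN\otimes\cA_d)$ and $\sup_d Q(\cN\otimes\cA_d)$ coincide with $P_{ss}(\cN)$ and $Q_{ss}(\cN)$.
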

\begin{proof}
This follows because according to~\cite{smith2008quantum, smith2008private} we can write 
\begin{align}
Q_{ss}(\cn) &= Q_{ss}^{(1)}(\cn) =\sup_d Q_{}^{(1)}(\cn \otimes \cA_d) \\
P_{ss}(\cn) &= P_{ss}^{(1)}(\cn) =\sup_d P_{}^{(1)}(\cn \otimes \cA_d), 
\end{align}
as well as noticing that $(\cn \otimes \cA_d)^c = \cn^c \otimes \cA_d^c = \cn^c \otimes \cA_d$. Finally, combining both observations with Corollary~\ref{Cor:CapBounds} leads to the first result. 
\end{proof}

\section{Outlook and open problems}\label{Sec:Outlook}
In this work we derived operationally meaningful bounds on the capacities of quantum channels and quantum states.
It is an interesting (but hard) open problem to derive similar bounds for the distillable entanglement and distillable key under LOCC or PPT-preserving operations. Moreover, channel capacities with e.g.~two-way assistance might obey similar bounds. Overall, it would be interesting to investigate whether there is a more general framework in which such results can be proven. 

Finally, it would be interesting to find an example of the aforementioned class of channels that are `bi-PPT' (both the channel and its complement are PPT) but not antidegradable, or more generally any channel that is regularized less noisy but not degradable.
Such channels would give promising candidates to show superactivation of the private capacity.

\section*{Acknowledgments} 
We thank Mark M. Wilde for suggesting to apply our ideas to one-way distillable secret key and energy-constrained capacities, and Graeme Smith for helpful feedback.

This project has received funding from the European Union's Horizon 2020 research and innovation programme under the Marie Sklodowska-Curie Grant Agreement No. H2020-MSCA-IF-2020-101025848.

\appendix

\section{Simple entropic proofs of some capacity bounds}

In this appendix we provide simple entropic proofs of some of the capacity bounds stated in \Cref{sec:bounds}.

\subsection{Bound on private information and private capacity}\label{sec:P-bound-app}

We first prove the following inequalities from \Cref{Cor:CapBounds}, giving bounds on the private information and private capacity of a quantum channel:
\begin{align}
	\priv{\cN} &\leq \coh{\cN} + \coh{\cN_c} \label{eq:priv-inf-upper-bound}\\
	P(\cN) &\leq Q(\cN) + Q(\cN_c).	\label{eq:priv-cap-upper-bound}
\end{align}

Let $\lbrace p(u), \rho_A^u\rbrace_u$ be a mixed quantum state ensemble achieving $\priv{\cN}$, and let $W_\cN\colon \ch_A\to \cH_B\otimes \cH_E$ be a Stinespring isometry for $\cN$.
For each $u$, we consider spectral decompositions of the state $\rho_A^u$, 
\begin{align}
	\rho_A^u = \sum_v p(v|u) |\phi^{u,\,v}\rangle\langle \phi^{u,\,v}|_A.
\end{align} 
We then form the classical-quantum state
\begin{align}
	\sigma_{UVBE} = \sum_{u,\,v} p(u)\, p(v|u) \, |u\rangle\langle u|_U \otimes |v\rangle \langle v|_V \otimes W_\cN |\phi^{u,\,v}\rangle\langle \phi^{u,\,v}|_A W_\cN^\dagger,
\end{align}
and note that $\sigma_{UBE}$ is the classical-quantum state satisfying $\priv{\cN} = I(U:B)_\sigma - I(U:E)_\sigma$.
Consider now the following:
\begin{align}
	\priv{\cN} &= I(U:B)_{\sigma} - I(U:E)_{\sigma}\\
	&= I(UV:B) - I(UV:E) + I(V:E|U) - I(V:B|U)\\
	&= I(UV:B) - I(UV:E) + \sum_u p(u) \left[ I(V:E|U=u) - I(V:B|U=u) \right]\\
	&\leq \coh{\cN} + \sum_u p(u) \coh{\cN_c}\\
	&= \coh{\cN} + \coh{\cN_c}.
\end{align}
In the third equality we used the fact that $I(R:Q|X)_\tau = \sum_x p(x) I(R:Q|X=x)_\tau$, where $I(R:Q|X=x)_\tau = I(R:Q)_{\tau^x}$ for a classical-quantum state $\tau_{XRQ} = \sum_x p(x) |x\rangle\langle x|_X \otimes \tau^x_{RQ}$.
For the inequality, we used \eqref{eq:coherent-information} for each difference term.
The inequality \eqref{eq:priv-cap-upper-bound} now follows from regularizing \eqref{eq:priv-inf-upper-bound}.

\subsection{Bound on quantum capacity}\label{sec:Qss-app}

Here we prove the following bound on the quantum capacity, which was stated in Equation~\eqref{Eq:QQPE}:
\begin{align}
	Q(\cN) \leq Q^{(1)}(\cN) + P_E(\cN^c).\label{eq:Q-Qss}
\end{align}

Let $\rho_{A^n}$ be a state achieving $Q^{(1)}(\cN^{\otimes n})$, and let $U_\cN\colon \ch_A\to \cH_B\otimes \cH_E$ be a Stinespring isometry for $\cN$.
Consider now the following steps, which are inspired by \cite{LS08,sutter2017approximate,cross2017uniform}:
\begin{align}
	Q^{(1)}(\cN^{\otimes n}) &= H(B^n) - H(E^n)\\
	&= H(B_1\dots B_n) - H(E_1B_2\dots B_n)\notag\\
	&\quad {} + H(E_1 B_2\dots B_n) - H(E_1E_2 B_3 \dots B_n)\notag\\
	&\quad {} + \dots \notag\\
	&\quad {} + H(E_1 \dots E_{n-1} B_n) - H(E_1 \dots E_n). \label{eq:n-letter}
\end{align}
Using the notation $R_{i}^j = R_i R_{i+1}\dots R_{j-1} R_{j}$ for $i<j$, we have for each $i=1,\dots,n$ that
\begin{align}
	H(E_1^{i-1} B_i B_{i+1}^n) - H(E_1^{i-1} E_i B_{i+1}^n) &= H(B_i) - H(E_i) + I(E_i: E_1^{i-1}B_{i+1}^n) - I(B_i: E_1^{i-1}B_{i+1}^n)\\
	&\leq Q^{(1)}(\cN) + P_E(\cN^c).\label{eq:F-bound}
\end{align}
Combining \eqref{eq:n-letter} and \eqref{eq:F-bound} yields 
\begin{align}
	Q^{(1)}(\cN^{\otimes n}) \leq n Q^{(1)}(\cN) + n P_E(\cN^c)\!,
\end{align}
from which \eqref{eq:Q-Qss} follows by dividing by $n$ and taking the limit $n\to \infty$.

Finally, we show the following inequality stated in \eqref{eq:P_E-Qss}:
\begin{align}
	P_E(\cN) \leq 2 Q_{ss}(\cN).\label{eq:P_E-Qss-app}
\end{align}
To see this, recall the following rewriting of the symmetric side channel assisted capacity proved in \cite{smith2008quantum}:
\begin{align}
	\Qss{\cN} = \max_{\rho_{VWA}} \frac{1}{2} \left\lbrace I(V:WB)_\sigma - I(V:WE)_\sigma\right\rbrace,\label{eq:Qss-rewritten}
\end{align}
with the entropies evaluated on the state $\sigma_{VWBE}\coloneqq (I_{VW}\otimes U_\cN)\rho_{VWA}(I_{VW}\otimes U_\cN)^\dagger$.
Comparing this expression with the definition \eqref{eq:EA-priv} of the entanglement-assisted private information, we obtain \eqref{eq:P_E-Qss-app} by choosing the system $W$ to be trivial.
Combining \eqref{eq:Q-Qss} and \eqref{eq:P_E-Qss-app} further shows that
\begin{align}
	Q(\cN) \leq Q^{(1)}(\cN) + 2 Q_{ss}(\cN^c).
\end{align}

\section{On relationships between approximate partial orders}\label{Ap:Relations}

The relationships between different partial orders are a fundamental problem and have been investigated at several points in the literature. In particular, the relationships between most of the orders discussed in this work have been recently investigated in~\cite{HRS20}, however only in the case $\epsilon=0$. Focusing on approximate orders, it is worth noting that the picture becomes substantially more complicated when $\epsilon>0$. 

We first start with a simple example motivated by Section~\ref{Sec:sidechannel}. Note that we have the following implications:
\begin{align}
P_{ss}^{(1)}(\cn^c) = 0  \quad\Rightarrow\quad  Q_{ss}^{(1)}(\cn^c) = 0  \quad\Rightarrow\quad  P_{}(\cn^c)=0.
\end{align}
However, when allowing an approximation, if we go via $Q_{ss}^{(1)}(\cn^c)$, the best we can currently show is
\begin{align}
P_{ss}^{(1)}(\cn^c) \leq \epsilon \quad\Rightarrow\quad  Q_{ss}^{(1)}(\cn^c)\leq \epsilon  \quad\Rightarrow\quad  P_{}(\cn^c)\leq 2\epsilon, 
\end{align}
although generally we also have
\begin{align}
P_{ss}^{(1)}(\cn^c) \leq \epsilon \quad\Rightarrow\quad  P_{}(\cn^c)\leq \epsilon.
\end{align}
Therefore the implied exact partial orders have a simpler relationship than the more general approximate versions. 

We now discuss the main part of this section. Recall the definition of $R^{(1)}(\cN)$, which is, similar to the other quantities defined in this work, related to a partial order. For two quantum channels $\cN$ and $\cM$ we denote $\cN \erel{\epsilon} \cM$ if
\begin{align}
D(\cM(\rho) \| \cM(\sigma) ) \leq D(\cN(\rho) \| \cN(\sigma) ) + \epsilon \quad\forall \rho, \sigma \label{eq:rel-ent-partial}
\end{align}
and simply 
$\cN \rel \cM$ if $\epsilon=0$. 
An important technical result in~\cite{watanabe2012private} was the following observation,
\begin{align}
\cN \rel \cM  \quad \Leftrightarrow \quad \cN \sln \cM.
\end{align}
Following the proof in~\cite{watanabe2012private} it can be easily seen that also in the approximate case we still have
\begin{align}
\cN \erel{\epsilon} \cM  \quad \Rightarrow \quad \cN \esln{\epsilon} \cM. \label{Eq:ImpRelLN}
\end{align}
However, we will now see that the opposite direction is generally not true. For this, consider two erasure channels $\cE_1$ and $\cE_2$ with erasure probabilities $\epsilon_1$ and $\epsilon_2$, respectively. We know that for an erasure channel one has $I(A:B) = (1-\epsilon)I(A:A')$ and $D(\cE(\rho)\| \cE(\sigma)) = (1-\epsilon)D(\rho\|\sigma)$. First, consider the approximate less noisy condition, which for our example evaluates to
\begin{align}
(1-\epsilon_2) I(U:A) \leq (1-\epsilon_1) I(U:A) + \epsilon.
\end{align}
It can now easily be seen that the two channels are always $\epsilon$-approximately less noisy if $\epsilon = \max\{0, 2(\epsilon_1 - \epsilon_2) \log|A| \}$, because $I(U:A)\leq 2\log|A|$.
Next, note that the condition for the partial order based on relative entropy defined in \eqref{eq:rel-ent-partial} can be written as
\begin{align}
(1-\epsilon_2) D(\rho\|\sigma) \leq (1-\epsilon_1) D(\rho\|\sigma) + \epsilon.
\end{align}
Since the relative entropy can be arbitrarily large for suitably chosen $\rho$ and $\sigma$, there is in general no $\epsilon$ such that the above inequality always holds provided that $\epsilon_1>\epsilon_2$.
This proves that the reverse implication of Equation~\eqref{Eq:ImpRelLN} cannot hold.  Note however that this counterexample does not seem to work in the asymptotic setting, because both quantities collect prefactors of order $\epsilon^n$ that tend to $0$ for $n\rightarrow\infty$. Finally, the example can easily be specialized to the case where $\cM=\cN^c$, as the complementary channel of an erasure channel with erasure probability $\epsilon_1$ is an erasure channel with erasure probability $\epsilon_2=1-\epsilon_1$.

\section{Energy-constrained partial orders on quantum channels} 

In this section we return to the theme of quantum channels, but we add a twist by considering energy-constrained settings. We will again base our partial orders on the quantum and private capacities of a quantum channel, but this time we focus on their energy-constrained variants. Based on~\cite{wilde2018energy} we consider the following quantities: %
The energy-constrained quantum capacity
\begin{align}
Q_{H_A,E}(\cN) &= \lim_{n\rightarrow\infty} \frac1n Q^{(1)}_{H_{A^n},n E}(\cN^{\otimes n})\\
\text{with}\quad Q^{(1)}_{H_A,E}(\cN) &= \sup_{\substack{\rho_A \\ \tr H_A \rho_A \leq E}} \lbrace H(\cN(\rho_A)) - H(\cN^c(\rho_A)) \rbrace,
\end{align}
and the energy-constrained private capacity
\begin{align}
P_{H_A,E}(\cN) &= \lim_{n\rightarrow\infty} \frac1n P^{(1)}_{H_{A^n},n E}(\cN^{\otimes n})\\
\text{with}\quad P^{(1)}_{H_A,E}(\cN) &= \sup_{\substack{\rho_{UA} \\ \tr H_A \rho_A \leq E}} \lbrace I(U:B) - I(U:E)\rbrace. 
\end{align}
For the energy constraint, we define the Hamiltonian $H_{A^n}$ on $n$ copies of the input quantum system as the extension of the single system Hamiltonian $H_A$ as 
\begin{align}
H_{A^n} = H_A \otimes \11_A \otimes \dots \otimes \11_A + \dots + \11_A \otimes \dots \otimes \11_A \otimes H_A. 
\end{align}
Throughout this section we will assume that the finite output entropy condition holds, that is,
\begin{align}
\sup_{\substack{\rho_A \\ \tr H_A \rho_A \leq E}} H(\cN(\rho_A)) < \infty. 
\end{align}
It was shown in~\cite{wilde2018energy} that if this condition holds for a channel $\cN$, it also holds for the complementary channel $\cN^c$. 
We can now define the following energy-constrained partial orders.

\begin{definition}
A channel $\cn$ is called: 
	\begin{itemize}
	\item \textit{$(\epsilon,H_A,E)$-regularized less noisy} if $P_{H_A,E}(\cN^c) \leq \epsilon\;;$
	\item \textit{$(\epsilon,H_A,E)$-less noisy} if $P_{H_A,E}^{(1)}(\cN^c) \leq \epsilon\;;$
	\item \textit{$(\epsilon,H_A,E)$-regularized more capable}  if $Q_{H_A,E}(\cN^c) \leq \epsilon\;;$
	\item \textit{$(\epsilon,H_A,E)$-more capable} if $Q_{H_A,E}^{(1)}(\cN^c) \leq \epsilon\;.$
\end{itemize}
\end{definition}
Note that each of these partial orders has an equivalent definition via a mutual information-based condition similar to the unconstrained case, with the difference that the condition only needs to be checked on states satisfying the energy constraint. For the less noisy orderings, this is fairly obvious from the definition. For the more capable orderings, consider a state $\rho_A= \sum_i \lambda_i | \Psi_i\rl\Psi_i|$ and its extension $\rho_{UA} = \sum_i \lambda_i |i\rl i| \otimes | \Psi_i\rl\Psi_i|$. The energy constraint $\tr H_A \rho_A \leq E$ remains the same and can be interpreted as an average energy constraint of the ensemble $\{ \lambda_i, |\Psi_i\rangle\}$ as $\sum_i \lambda_i \tr H_A |\Psi_i \rl \Psi_i | \leq E$, similarly to the less noisy setting. 

In the unconstrained case we saw that the less noisy order is closely related to the concavity of a channel's coherent information. A careful check reveals that the same holds true if an energy constraint is to be obeyed. 
\begin{lemma}
A channel $\cN$ is $(\epsilon,H_A,E)$-approximate less noisy if and only if its channel coherent information is approximately concave for all quantum states $\rho_A^i$ and probability distributions $p(i)$ satisfying $\tr H_A \rho_A \leq E$ with $\rho_A=\sum_i p(i) \rho_A^i$:
\begin{align}
\sum_i p(i) I(A\rangle B)_{\rho_i} \leq I(A\rangle B)_{\rho} + \epsilon,
\end{align}
where $I(A\rangle B)_{\rho}$ is evaluated on the state $\cN(\Psi_{AA'})$ with $\Psi_{AA'}$ a purification of $\rho_A$. Similarly, a channel $\cN$ being $(\epsilon,H_A,E)$-approximate anti-less noisy is equivalent to
\begin{align}
\sum_i p(i) I(A\rangle B)_{\rho_i} \geq I(A\rangle B)_{\rho} - \epsilon.
\end{align}
For an arbitrary quantum channel $\cN$ and states obeying $\tr H_A \rho_A \leq E$, we have
\begin{align}
I(A\rangle B)_{\rho} - P_{H_A,E}^{(1)}(\cN) \leq \sum_i p(i) I(A\rangle B)_{\rho_i} \leq I(A\rangle B)_{\rho} + P_{H_A,E}^{(1)}(\cN^c). 
\end{align}
\end{lemma}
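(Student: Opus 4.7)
The plan is to adapt the proof of \cite[Proposition~3]{watanabe2012private} to the approximate, energy-constrained setting by establishing a single identity that translates the concavity statement about $I(A\rangle B)$ into a mutual-information inequality of the form appearing in the (energy-constrained) less noisy order, and then to invoke the definitions directly.

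First I would set up the bridging identity. Given an ensemble $\{p(i),\rho_A^i\}$ with mixture $\rho_A = \sum_i p(i)\rho_A^i$ satisfying $\tr H_A\rho_A \leq E$, I form the classical-quantum state
\begin{equation*}
	\rho_{UA} \;=\; \sum_i p(i)\, |i\rangle\!\langle i|_U \otimes \rho_A^i,
\end{equation*}
whose $A$-marginal equals $\rho_A$ and therefore meets the same energy constraint. Applying a Stinespring isometry $V$ of $\cN$ to the $A$-system yields $\rho_{UBE}$; because the classical register $U$ conditions on $\rho_A^i$, we have $H(B|U) = \sum_i p(i) H(B)_{\rho_i}$ and $H(E|U) = \sum_i p(i) H(E)_{\rho_i}$, while the marginal entropies $H(B),H(E)$ are those of $\cN(\rho_A), \cN^c(\rho_A)$. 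A direct calculation gives the key identity
\begin{equation*}
	I(U:B) \,-\, I(U:E) \;=\; I(A\rangle B)_\rho \;-\; \sum_i p(i)\, I(A\rangle B)_{\rho_i},
\end{equation*}
where each $I(A\rangle B)_{\rho_i}$ is computed on a Stinespring dilation of a purification of $\rho_A^i$. Crucially, every classical-quantum state on $UA$ with $\tr H_A\rho_A\leq E$ arises in this way, so this correspondence is a bijection between admissible mixtures and admissible CQ states.

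With this identity in hand, the first two equivalences are immediate. By the capacity reformulation in Table~\ref{tab:orders} applied in the energy-constrained setting, $\cN$ is $(\epsilon,H_A,E)$-less noisy iff $I(U:E)\leq I(U:B)+\epsilon$ for every admissible $\rho_{UA}$, which via the identity rewrites exactly as $\sum_i p(i)\,I(A\rangle B)_{\rho_i} \leq I(A\rangle B)_\rho + \epsilon$ for every admissible ensemble. The anti-less-noisy case follows by swapping the roles of $\cN$ and $\cN^c$, i.e.\ exchanging $B\leftrightarrow E$ in the identity. For the final sandwich bound, I would simply note that for any channel $\cN$ the definitions of $P^{(1)}_{H_A,E}(\cN)$ and $P^{(1)}_{H_A,E}(\cN^c)$ directly yield $-P^{(1)}_{H_A,E}(\cN)\leq I(U:B)-I(U:E)\leq P^{(1)}_{H_A,E}(\cN^c)$ on admissible states, which via the identity gives the stated two-sided bound.

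The only subtlety, and the one to verify carefully, is that the energy constraint passes cleanly between the mixture formulation and the CQ formulation: since the $A$-marginal of $\rho_{UA}$ coincides with $\rho_A$ regardless of the ensemble decomposition, the constraint $\tr H_A\rho_A\leq E$ is equivalent on both sides and corresponds to the ``average energy'' interpretation of the ensemble noted earlier in the paper. No regularization or asymptotic machinery is needed—the argument is single-letter throughout—so I do not expect any genuine obstacle beyond checking that the supremum in the energy-constrained $P^{(1)}$ is taken over the same class of states as the mixtures parametrizing the concavity inequality.
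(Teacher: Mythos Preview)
Your approach is correct and is exactly what the paper does: it simply remarks that the unconstrained lemma---proved by adapting \cite[Proposition~3]{watanabe2012private} via the identity $I(U{:}B)-I(U{:}E)=I(A\rangle B)_\rho-\sum_i p(i)\,I(A\rangle B)_{\rho_i}$---carries over verbatim because the energy constraint on the CQ state $\rho_{UA}$ coincides with the average-energy constraint on the mixture, and the third statement follows since every channel is $(\epsilon,H_A,E)$-less noisy (resp.\ anti-less noisy) for $\epsilon=P^{(1)}_{H_A,E}(\cN^c)$ (resp.\ $P^{(1)}_{H_A,E}(\cN)$). One small slip: your intermediate sandwich has the labels swapped; the correct bound reads $-P^{(1)}_{H_A,E}(\cN^c)\le I(U{:}B)-I(U{:}E)\le P^{(1)}_{H_A,E}(\cN)$, which under your identity yields the stated two-sided inequality.
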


Now we would like to briefly discuss to what extent the bounds and results on channel capacities can be extended to the energy-constrained setting. To this end, we briefly revisit the approach in~\cite{watanabe2012private}. Take quantum states $\rho_A^u$ and a probability distribution $p(u)$, define $\rho_{UA} = \sum_u p(u) |u\rangle\langle u| \otimes \rho_A^u$ and $\rho_A=\tr_U\rho_{UA}$. A central observation to the proofs in~\cite{watanabe2012private} is that the following equality holds, 
\begin{align}
I(U:B) - I(U:E) &= I(A\rangle B)_{\rho} -  \sum_i p(i) I(A\rangle B)_{\rho_i} \\
&= I(A\rangle B)_{\rho} +  \sum_i p(i) I(A\rangle E)_{\rho_i}.
\end{align}
If we fix $\rho_{UA}$ to be the optimizing state in $P_{H_A,E}^{(1)}(\cN^c)$, we easily get 
\begin{align}
P_{H_A,E}^{(1)}(\cN) \leq Q_{H_A,E}^{(1)}(\cN)  + \sum_i p(i) I(A\rangle E)_{\rho_i}. 
\end{align}
In the unconstrained case it is now easy to bound each of the remaining coherent informations by $Q^{(1)}(\cN^c)$, which makes the average irrelevant and leads to our previously stated inequality. However, in the constrained case, we can not simply do the same using $Q_{H_A,E}^{(1)}(\cN^c)$ because the individual $\rho_i$ might not fulfill the energy constraint $\tr H_A \rho^i_A \leq E$. One might be tempted to remedy this problem by using concavity, but from the previous lemma it is clear that this doesn't seem to help for general channels. 
In~\cite{wilde2018energy} it was shown that $Q_{H_A,E}^{(1)}(\cN)$ equals both the energy-constrained quantum and private capacity of a degradable channel. How is this compatible with the above observations? To us, the most likely explanation seems to be the following.  Note that degradability is usually defined via the diamond norm and therefore considering all possible input states without constraint. Equivalently, we can prove
\begin{align}
P_{H_A,E}^{(1)}(\cN) \leq Q_{H_A,E}^{(1)}(\cN)  + Q^{(1)}(\cN^c), 
\end{align}
showing that $P_{H_A,E}^{(1)}(\cN) = Q_{H_A,E}^{(1)}(\cN)$ for all $\epsilon$-less noisy channels $\cN$, which is a significantly weaker requirement than degradability. 

One could similarly define a weaker form of degradability that obeys an energy constraint. It is an interesting question whether results like those  in~\cite{watanabe2012private} hold under this requirement. 
\begin{definition}
A channel $\cN$ is called $(\epsilon,H_A,E)$-degradable if there exists a channel $\cD$ such that
\begin{align}
\| \cN^c - \cD\circ\cN \|_\diamond^{H_A,E} \leq \epsilon, 
\end{align}
where $\| \Delta \|_\diamond^{H_A,E}$ is the energy-constrained diamond norm defined in~\cite{shirokov2018energy}. 
\end{definition}
The energy-constrained diamond norm has already found several applications, in particular for infinite dimensional systems, see e.g.~\cite{shirokov2018energy, winter2017energy}. 

Finally, we comment on single-letter upper bounds on regularized capacities. Note that, following the proof in~\cite{cross2017uniform}, one obtains
\begin{align}
I(A\rangle B^n) \leq \sum_i I(A\rangle B_i) + \sum_i [ I(V|B_i) - I(V|E_i) ]. 
\end{align}
However, while the state $\rho_{A^n}$ obeys the constraint $\tr H_{A^n}\rho_{A^n}\leq nE$, the best we can say about the individual $\rho_{A_i}$ is that they also obey $\tr H_{A}\rho_{A_i}\leq nE$. This leads us to the somewhat unsatisfying result
\begin{align}
Q_{H_A,E}^{(1)}(\cN^{\otimes n})  \leq n Q_{H_A,nE}^{(1)}(\cN)  +  n P_E^{H_A,nE}(\cN^c), 
\end{align}
where $P_E^{H_A,E}$ is the energy-constrained entanglement-assisted private information defined as 
\begin{align}
P_E^{H_A,E}(\cN) = \sup_{\substack{\rho_{AA'} \\ \tr H_A \rho_{A'} \leq E}} \lbrace I(A:B) - I(A:E)\rbrace. 
\end{align}
If one wishes to regularize, the energy constraints on the right hand side would become meaningless, resulting in the inequality
\begin{align}
Q_{H_A,E}(\cN)  \leq  Q^{(1)}(\cN)  +  P_E(\cN^c).
\end{align}
This implies once more that the desired simplifications only seem to hold if a requirement without energy-restriction holds. 
Note that the above behavior is certainly intuitive as the way energy-constrained capacities are regularized allows for strategies where a single input uses an arbitrarily high amount of energy as long as it is compensated by the other channel uses. The problem would be resolved if one considered a more restricted way of regularizing the quantities where each channel input is subject to a fixed energy constraint instead of an average energy constraint on the overall state.
This might also be a practically more relevant scenario.

\printbibliography[heading=bibintoc]
\end{document}